
\documentclass{amsart}

\usepackage{amsmath}
\usepackage{amsthm}
\usepackage{amssymb}
\usepackage{amsthm}
\usepackage{graphicx}
\usepackage{caption}
\usepackage{subcaption}
\usepackage[left=1.4in, right=1.4in, top=1.5in, bottom=1.5in]{geometry}

\setlength{\parindent}{0.0in}
\setlength{\parskip}{0.15in}

\vfuzz2pt 

\allowdisplaybreaks[3]

 \newtheorem{thm}{Theorem}[section]

 \theoremstyle{definition}
 
 \theoremstyle{remark}
 
 \numberwithin{equation}{section}


 \newcommand{\h}{\mathcal{H}}

 \newcommand{\abs}[1]{\left\vert#1\right\vert}


\author{Johan G.B. Beumee$^\dag$ \\Chris Cormack$^\dag$ \\Manish Patel$^\dag$ \\Peyman Khorsand$^\ddag$}
\thanks{$^\dag$MPCapital Advisory Services LLP}
\thanks{$^\ddag$Jefferies Bank}

\begin{document}

\email{johan.beumeeATbtinternet.com, jbeumeeATmpcapital.co.uk}

\thanks{This work was completed with support from MP Capital.}

\date{}

\keywords{path diffusion, discrete Markov Chain,
Telegraph Equation, Klein-Gordon Equation, Newton's
Equation}


\begin{abstract}
This paper investigates the position (state)
distribution of the single step binomial
(multi-nomial) process on a discrete state / time grid
under the assumption that the velocity process rather
than the state process is Markovian. In this model the
particle follows a simple multi-step process in
velocity space which also preserves the proper state
equation of motion.  Many numerical numerical examples
of this process are provided.  For a smaller grid the
probability construction converges into a correlated
set of probabilities of hyperbolic functions for each
velocity at each state point. It is shown that the two
dimensional process can be transformed into a
Telegraph equation and via transformation into a
Klein-Gordon equation if the transition rates are
constant.  In the last Section there is an example of
multi-dimensional hyperbolic partial differential
equation whose numerical average satisfies Newton's
equation. There is also a momentum measure provided
both for the two-dimensional case as for the
multi-dimensional rate matrix.
\end{abstract}

\title{Path Diffusion, part I}
\maketitle

\section*{Introduction}
This paper investigates the position (state)
distribution of the single step binomial process
assuming that the velocity on the node grid rather
than the state process is Markovian. Under this
assumption the particle steps up or down in velocity
following a simple (or multi)-step process on a fixed
grid with discrete time preserving the proper equation
of motion. The equations of motion are defined as a
joint probability per velocity and state as a function
of time and state with a state transition matrix, see
Section \ref{Construction1}.  The velocity rate matrix
is the consequence of the Markov assumption.

The two-factor (multi-factor) solution to the binomial
probability Markovian process has different types of
solutions, see the numerical examples in Section 2.
For very small rate transitions the final probability
distribution may show the original velocity
information and transport the original conditions into
the future leaving a small amount of residuals. Or the
rate probabilities are considerable strongly
re-bunching the distribution into something that looks
like a Gaussian process. The final distribution may
therefore have individual peaks reflecting velocity
distributions or if the rates are large enough it
shows a central single modal distribution which
listens to a mean.

For a much smaller grid and constant rates the
probability equations converge into a correlated set
of probabilities of hyperbolic functions for each
velocity in state point.  The two dimensional case can
be transformed into a Telegraph equation for the state
density which can be transformed into a Klein-Gordon
equation if the transition rates are constant.  An
average velocity from the state can be defined as well
as Section \ref{Continuous1} and Section
\ref{MultiDimensional1} show.

This equation can be done in two velocity spaces or in
an infinite number is a set of diffusion equations of
them.  Both for the two-dimensional applications and
the multi-dimensional case a forward and a backward
velocity can be found as Section III and Section V
show. In the last Section there is multi-dimensional
hyperbolic partial differential equation whose average
satisfies Newton's equation.

\section{The Construction}
\label{Construction1}
\subsection*{Equation}
To construct the process consider the discrete space,
discrete time grid as defined in Figure
 ~\ref{NodeTransitionOrigv1}. The process is given by
the nodes
\begin{align*}
x(t) &= m\Delta x, m=...,-1, 0,
1, 2, ...
\\
t &=0,h,2h,...
\end{align*}
where $\Delta x$ is a small space grid size (vertical)
and $h$ is a discrete grid for the time (horizontal).
Figure  ~\ref{NodeTransitionOrigv1} shows a process
representing a particle stepping across an infinite
node grid at discrete time intervals $0,h,2h,...$. At
every time the process steps up or steps down only one
node from any given node in the grid. Clearly the
process can step up and down many nodes at a time but
for the moment consider only the simplest case.

\begin{figure}
\centering
\includegraphics[width=4in]{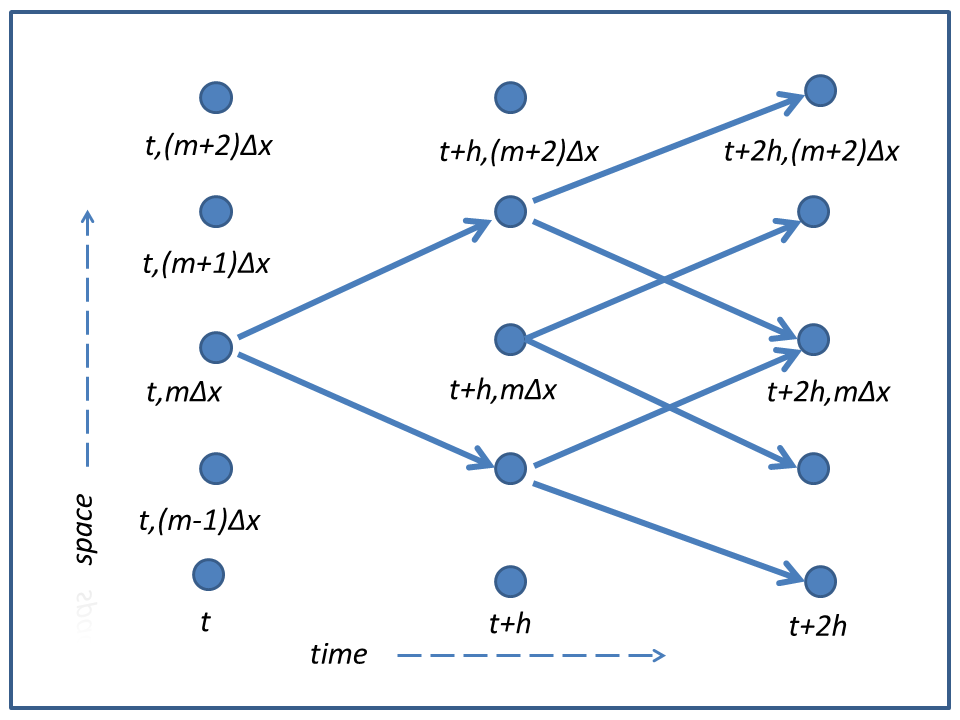}
\caption{Typical Mid-Grid Node Transitions}
\label{NodeTransitionOrigv1}
\end{figure}

If this process is Markovian in the state $x$ the
probability of being in state $x(t+h)$ depends on the
state $x(t)$ but not on any state before $t$.  By
design then a probability can be constructed on
$x(t+h)=m\Delta x,m=...-1,0,1,...$ from probabilities
in $x(t)$.  A Gaussian or backward equation can be
constructed in this fashion by considering the limit
for the outcome space by scaling $\Delta x \backsim
\sqrt(h)$ and then calculating the final distribution
from increasing evolutions ~\cite{FELLER}.

However, in this paper we assume that the process is
Markovian in the process velocity $(x(t),x(t+h))$
rather than the process state position $x(t)$.  This
means that any transition $(x(t),x(t+h))$ depends on
all previous transitions $(x(t-h),x(t))$ but not on
any previous transitions.

\begin{figure}
\centering
\includegraphics[width=4in]{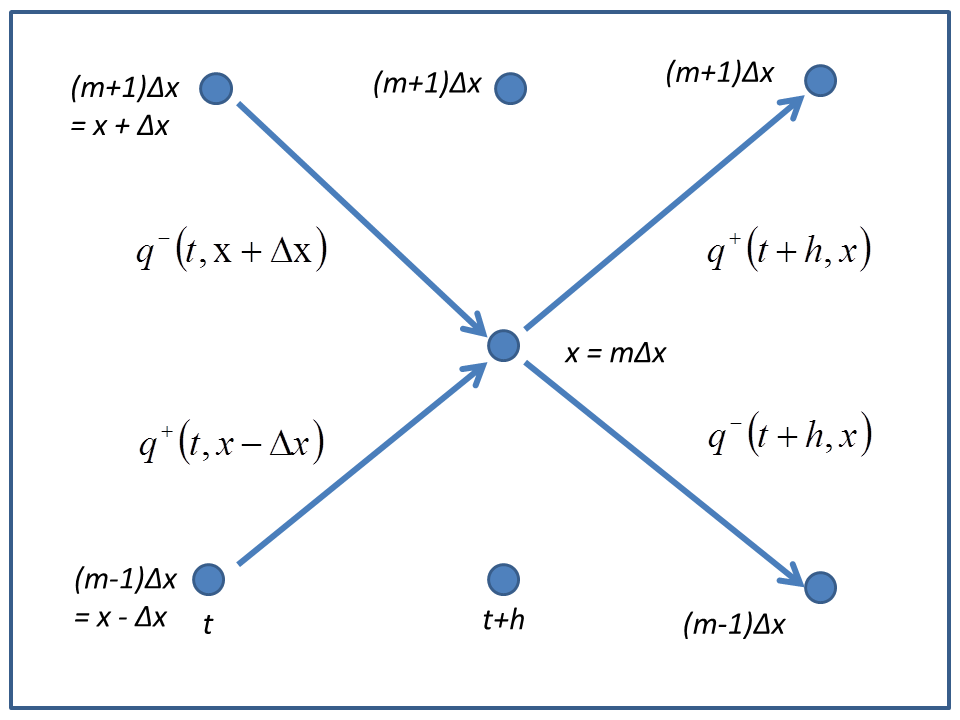}
\caption{Node Transition Analysis}
\label{NodeTransitionGraph}
\end{figure}

An example of this process is shown in Figure
~\ref{NodeTransitionGraph}.  If the particle resides
at time $t+h$ in the node in the middle of the grid
then $x(t+h)=m\Delta x = x$ for some $m$. Since it can
only step up or down from this state the only possible
outcomes at time point $t+2h$ are
\begin{align*}
x(t+2h)&=x+\Delta=(m+1)\Delta x
\\
&\text{or}
\\
x(t+2h)&=x-\Delta=(m-1)\Delta x.
\end{align*}
Similarly, to be in grid node $m\Delta x$ at time
$t+h$ the particle must have stepped down / up from
\begin{align*}
x(t)&=x+\Delta=(m+1)\Delta x
\\
&\text{or}
\\
x(t)&=x-\Delta=(m-1)\Delta x.
\end{align*}
The grid and the joint transition probabilities for
the stepping process are shown in Figure
~\ref{NodeTransitionGraph} for the gridpoints around
$t+h,m\Delta x$.

Now let $x=m\Delta x$ in Figure
~\ref{NodeTransitionGraph} and let
\begin{align}
\label{INITIALTRANS1}
\begin{split}
q^+(t+h,x)=P[x(t+2h)=x+\Delta x,x(t+h)=x]
\\
q^-(t+h,x)=P[x(t+2h)=x-\Delta x,x(t+h)=x]
\end{split}
\end{align}
be the up/down joint probabilities of an up or down
step after the particle travels through $x$.  As the
velocity process is Markovian the joint process
\eqref{INITIALTRANS1} must be dependent only on
\begin{align*}
q^+(t,x-\Delta x)&=P[x(t+h)=x,x(t)=x-\Delta x]
\\
q^-(t,x+\Delta x)&=P[x(t+h)=x,x(t)=x+\Delta x]
\end{align*}
linearly or otherwise. Notice that
$q^+(t+h,x),q^-(t+h,x),q^+(t,x-\Delta
x),q^-(t+h,x+\Delta x)$ are all joint distributions.

Hence the Markovian assumption requires that
\begin{align}
\label{TRANSITION1}
\begin{pmatrix}
q^+(t+h,x) \\
q^-(t+h,x)
\end{pmatrix}=
\begin{pmatrix}
1-\alpha(t,x) & \beta(t,x)\\
\alpha(t,x) & 1-\beta(t,x)
\end{pmatrix}
\begin{pmatrix}
q^+(t,x-\Delta x) \\
q^-(t,x+\Delta x)
\end{pmatrix}
\end{align}
for specific constants $\alpha(t,x)$, $\beta(t,x)$.
Notice that the $\alpha(t,x)$, $\beta(t,x)$ parameters
do not have to be equal but must be positive and that
the columns must add up to one to conserve
probability. After the particle arrives in $x$ it can
only step up or step down.

Probabilistically the $\alpha(t,x)$ parameters
consider the probability that the particle travels
downward from $x$ to $x-\Delta x$ at time $t$ after
travelling up from $x-\Delta x$ to $x$ at time $t-h$.
Similarly the $\beta(t,x)$ parameters consider the
probability that the particle travels upward from $x$
to $x+\Delta x$ at time $t$ after travelling down from
$x+\Delta x$ to $x$ at time $t-h$.  Specifically
\begin{align}
\label{TRANSITION1b}
\begin{split}
\begin{matrix}
\alpha(t,x)=P_t[x-\Delta,x,x-\Delta x] & \text{=  probability being in
$x(t)=x$ and stepping}
\\
{} &{\text{down after traveling from $x(t-h)=x-\Delta$}}
\\
\beta(t,x)=P_t[x+\Delta,x,x+\Delta x] &{\text {=  probability being in
$x(t)=x$ and stepping}}
\\
{} &{\text{up after traveling from $x(t-h)=x+\Delta$}}.
\end{matrix}
\end{split}
\end{align}
Notice that as the timestep $h$ becomes smaller these
curvature probabilities become smaller as well.

The probability densities $q^+(t,x)$, $q^-(t,x)$ are
joint particle distributions of being in position
$x(t)=x$ and moving in the "up" or "down" direction at
the same time. So in fact the probability density
$\rho(t,x)$ of the state can be defined as
\begin{align*}
P[x(t)=x]=\rho(t,x)=q^+(t,x)+q^-(t,x)
\end{align*}
which provides the probability that the particle is in
state $x$ at time $t$.  Now a summation over all $x =
m\Delta x$ (summing over $m$) will add up to one.

\subsection*{Results}
Adding the two equations in \eqref{TRANSITION1} yields
\begin{align*}
\rho(t+h,x) = q^+(t+h,x)+q^-(t+h,x)= q^+(t,x-\Delta
x)+q^-(t,x+\Delta x)
\end{align*}
so that the probability of a particle being in $t+h,x$
is accumulated from the probability of the particle
being in $x+\Delta$ at time $t$ stepping down and the
probability of the particle being in $x-\Delta$
stepping up.  This condition conserves probability
specifically for the case where only up or down steps
are allowed and clearly is dictated by the particle's
motion.

This equation also implies that
\begin{align*}
\sum_{m}\rho(t+h,m\Delta x) &= \sum_{m} q^+(t,(m-1)\Delta x)+\sum_{m}
q^-(t,(m+1)\Delta x)
\\
&=\sum_{m} q^+(t,m\Delta x)+\sum_{m}
q^-(t,m\Delta x)
\\
&=\sum_{m} \rho(t,m\Delta x)=1
\end{align*}
which shows that state probability is conserved.  Also then
\begin{align}
\label{TRANSITION43}
\begin{split}
&\sum_{m} q^+(t,m\Delta)<1
\\
&\sum_{m} q^-(t,m\Delta)<1.
\end{split}
\end{align}

\begin{figure}
\centering
\includegraphics[width=4in]{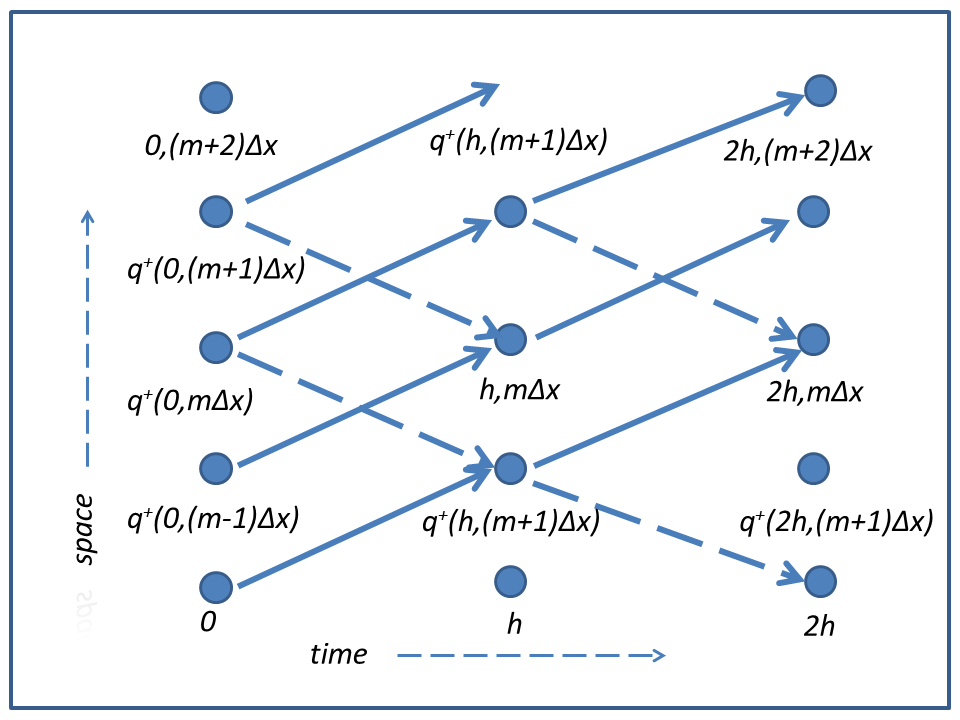}
\caption{Initial Grid / Tree Transitions}
\label{NodeTransitionOrigv2}
\end{figure}

To start the solution to \eqref{TRANSITION1} consider
Figure \ref{NodeTransitionOrigv2} where the positive
probability of transition is given solid and the
downward probabilities are gridded.  The first line
shows the starting zero line and the starting
probabilities $q^+(0,m\Delta x)$ for all $m$.
Simultaneously there is set of initial probabilities
in the downward direction $q^-(0,m\Delta x)$ for all
$m$.  The second line $q^+(h,m\Delta x)$ is
constructed from a combination of both $q^+(0,m\Delta
x)$ and $q^-(0,m\Delta x)$ using \eqref{TRANSITION1}.
From \eqref{TRANSITION43} it is clear that neither
$q^+(0,m\Delta x)$ or $q^-(0,m\Delta x)$ are proper
distributions but together they are to add to one.
Hence
\begin{align*}
&\rho(0,m\Delta x) = q^+(0,m\Delta x) + q^-(0,m\Delta x) \text{   all $m$}
\\
&q^+(0,m\Delta x) \geq 0 \text{   all $m$}
\\
&q^-(0,m\Delta x) \geq 0 \text{   all $m$}
\\
&\sum_{m}\left(q^+(0,m\Delta x) + q^-(0,m\Delta x)\right)=1
\end{align*}
is the appropriate starting condition for the process.

If a particle residing in $x$ at time $t$ steps to
$x+\Delta$ at time $t+h$ it adopts a velocity of
$v^+=\Delta x /h=c$. Similarly if the particle steps
down to $\Delta x$ it adopts a velocity of
$v^-=-\Delta x /h=-c$. Clearly then the + and the - in
the marginal densities $q^\pm(t,x)$ also refer to the
speed of the particle passing through $x$.  If there
are bigger steps from $x$ where it jumps many nodes
there could be speeds of $jc,j=...,-2,-1,0,1,2,...$ as
multiples of the one-step particle velocity.

\section{Numerical Examples}
\label{Numerics1}
\subsection*{First Example, small transition
probabilities} The first numerical solution is
presented in Figures \ref{SmallMDistribution1} and
\ref{SmallMDistribution1a} presenting the numerical
probability density solution of equation
\eqref{TRANSITION1} for the case where the transition
probabilities are very small ($\alpha=0.006,
\beta=0.006$ per time step).  The size of $\Delta
x=0.3$ and the timestep $h=0.003$ so the distribution
tree has a relatively high speed of $0.3 / 0.003 = \pm
100$. The initial probability distribution at time 0
stretches from -6.9 to 6.9 over some 46 nodes.

For this case we use the initial distributions
$q^+(0,x)$, $q^-(0,x)$ assuming a set of discrete
Gaussian distributions so that
\begin{align*}
&q^+(0,x)=q^-(0,x)= c_0e^{-\frac{x^2}{2\sigma^2}}
\\
&\rho(0,m\Delta x) =q^+(0,m\Delta x) +q^-(0,m\Delta x)
\\
&\sum_{m}\rho(0,m\Delta x) = 1
\end{align*}
where the constant $c_0$ has been chosen so that
$\rho(0,m\Delta x)$ is a proper discrete distribution
in the discrete parameter $m$.  This distribution of
$\rho(0,x)$ is shown in the center of Figure
\ref{SmallMDistribution1} for a standard deviation
equal to 0.6.

The calculation grid in this is a tree starting at
zero steadily enlarging for some 150 timesteps where
the final time will be $t=0.45=150*0.03$.  The effect
of the small transition probabilities suggests that
half the initial probability solutions is sent
symmetrically up the grid with speed $100$ and half
the initial distribution is sent down the grid at
$-100$.

With 46 nodes the symmetric original initial
distribution shows a distribution of $-6.9,6.9$ as the
central distribution in Figure
\ref{SmallMDistribution1} shows.  With the fact that
the initial tree is $-6.9,6.9$ this means that the
final set of nodes reaches $-45-6.9,45+6.9$ =
$-51.9,51.9$. which is exactly the boundary shown for
the left side and right side distribution in Figure
\ref{SmallMDistribution1}. Notice that the two left
and right distribution are very similar to the initial
distribution but not exactly the same while the amount
of probability in this distribution is less than half
the initial probability density.

\begin{figure}
\centering
\includegraphics[width=4in]{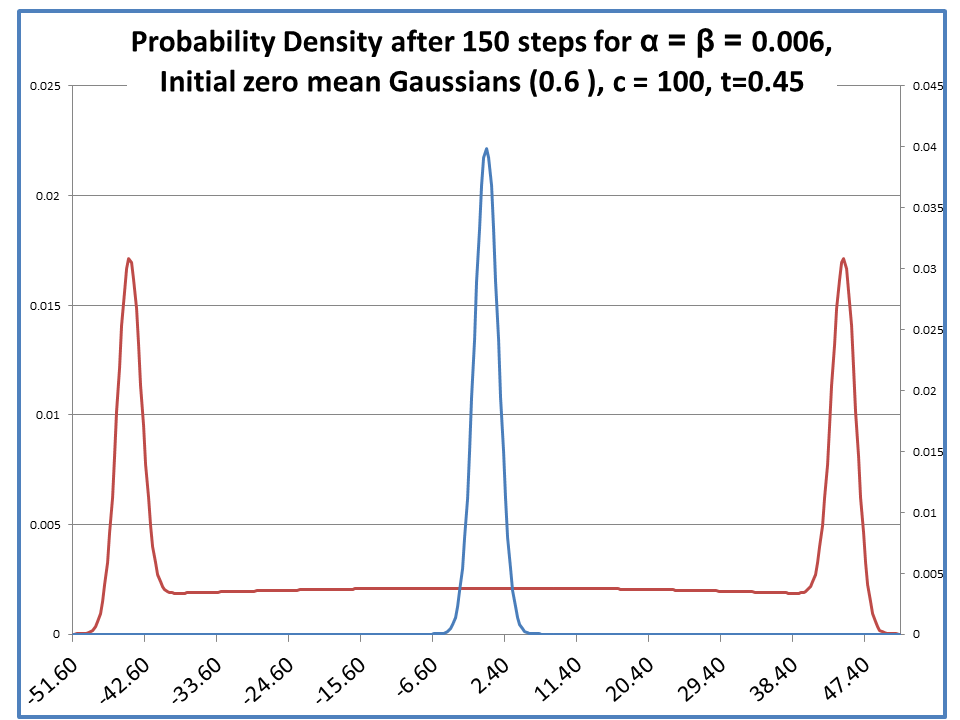}
\caption{Initial, Final Node Densities, Small Transitions}
\label{SmallMDistribution1}
\end{figure}

\begin{figure}
\centering
\includegraphics[width=4in]{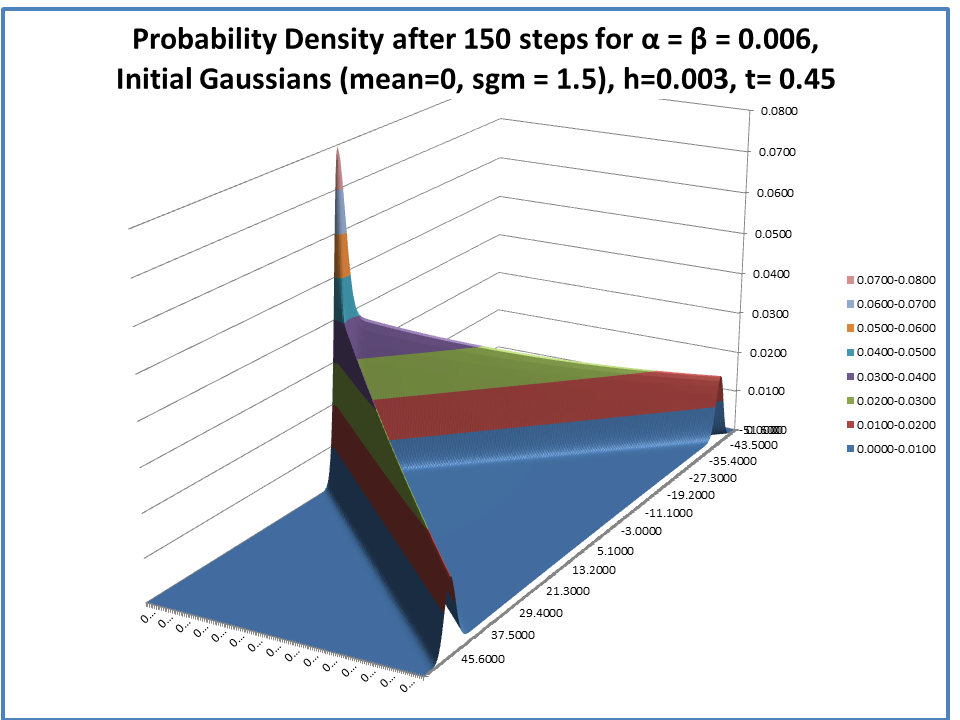}
\caption{3D Node Transition Travel, Small Transitions}
\label{SmallMDistribution1a}
\end{figure}

Once the initial distributions have been determined
the distribution is calculated from the difference
equation \eqref{TRANSITION1} and Figure
\ref{SmallMDistribution1} shows the final distribution
after 150 time steps and the initial distribution
$\rho(0,x)$. Figure \ref{SmallMDistribution1a}
presents a three dimensional picture of the
distribution changes.  The initial distribution
contains all the probability which is then split in
two, half of it traveling up and 100 and part of it
traveling down at -100.  The end distributions are
slightly wider than the initial distribution.  Notice
that there is also a certain amount of probability
assigned to the interval in between the extremes,
compare Figures \ref{SmallMDistribution1} and
\ref{SmallMDistribution1a}.

Physically this example shows the distribution of
particles that start in about 46 nodes between -6.9
and 6.9 on the real line and then step upward or
downward with about equal probability.  Once the
particles start moving up / down the grid the change
that they change direction is small.

\begin{figure}
\centering
\includegraphics[width=4in]{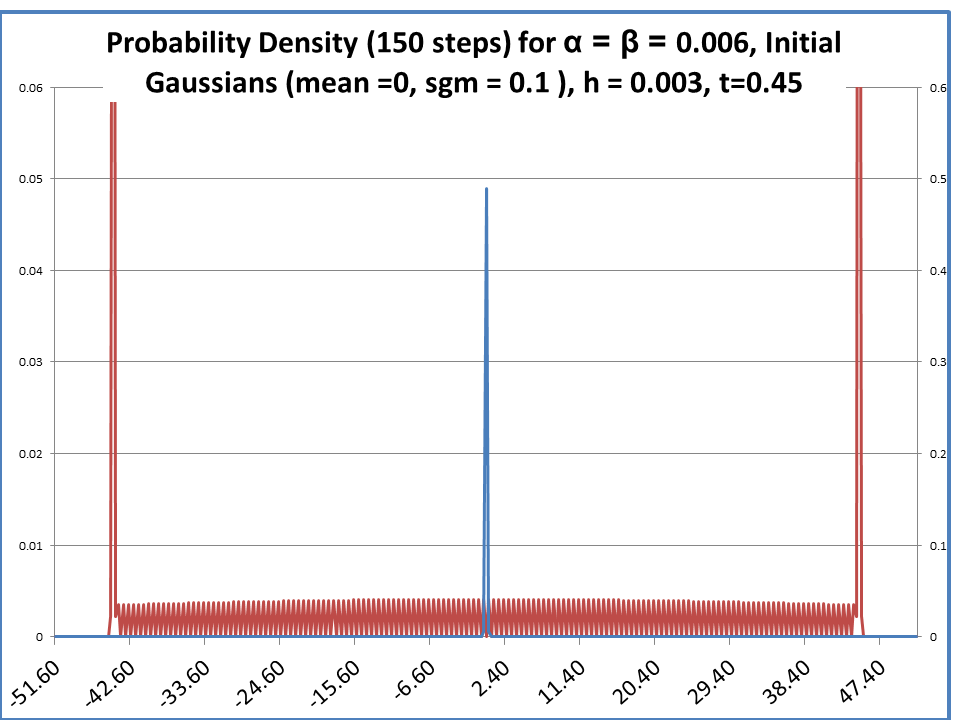}
\caption{Extreme Narrow Initial Probability Concentration}
\label{SmallMDistribution2}
\end{figure}

\subsection*{Second Example, small transition probabilities, pointed initial
conditions} A more extreme concentration version of
the previous example can be created by taking the
initial distribution equal to a point weight.  The
simulation is exactly the same as in the example above
but the initial distribution now has a standard
deviation of 0.1.  The initial distribution
distribution and the 150 step final result is shown in
Figure \ref{SmallMDistribution2}. Notice that the end
distribution (left and right side of Figure) have been
scaled up (right hand side of the Figure) showing the
final distributions of the initial conditions as in
the previous example.  However also notice that the
final distribution between the extremes is now a clear
zig-zag pattern of values.

This Figure erratic distribution pattern of
probability over the final distribution is due to the
fact that the particles can only step up or down.
Starting in node 0 a particle can only reside on node
+1 or -1 after one time step but not on node 0.
Similarly after 2 steps the particle can only reside
on node +2, 0, -2 but not on nodes +1 or -1.  The
final result is a superimposed interference pattern
which is an artifact of the fact that the particles
are not allowed to remain on certain nodes and the
concentrated initial distribution. As a result there
is a noticeable difference in the distribution between
adjacent nodes and the difference in value creates the
interference. Taking a wider initial distribution
removes this mixture and recreates a more continuous
final distribution.

Interestingly the interference patterns depend on the
initial distributions and the transition
probabilities.  Given the small transition
probabilities the interference decreases with
increasing initial distribution variance around the
0.25 - 0.3 standard deviation.  This was determined by
looking at Figure \ref{SmallMDistribution2} while
varying the standard deviation of the initial
distribution.

\begin{figure}
\centering
\includegraphics[width=4in]{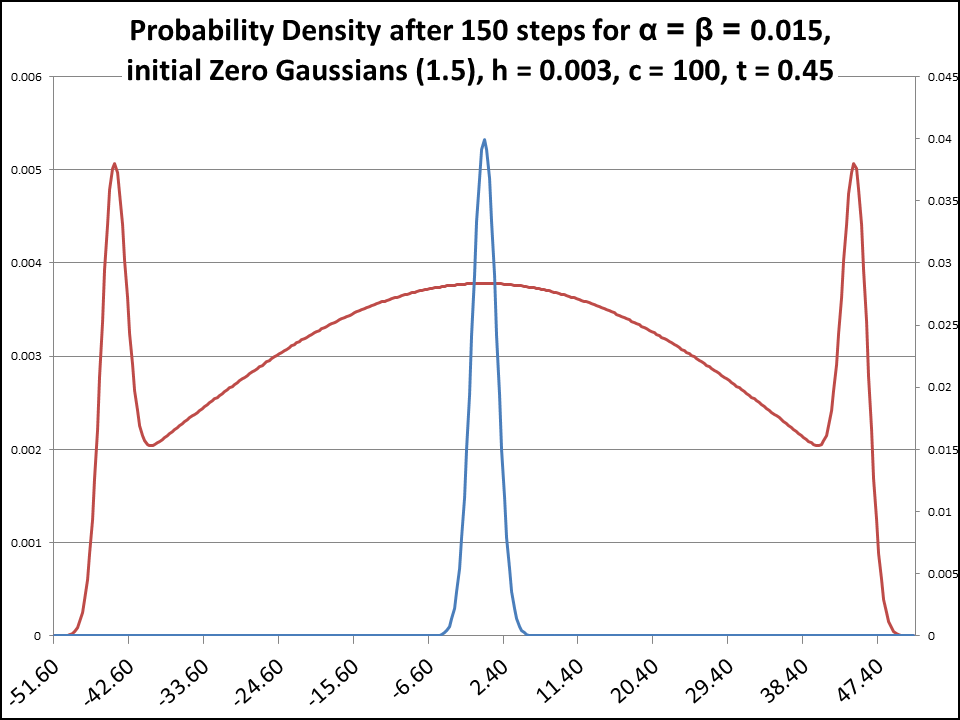}
\caption{Initial, Final Node Densities, Mixed Transitions}
\label{SmallMDistribution3}
\end{figure}

\begin{figure}
\centering
\includegraphics[width=4in]{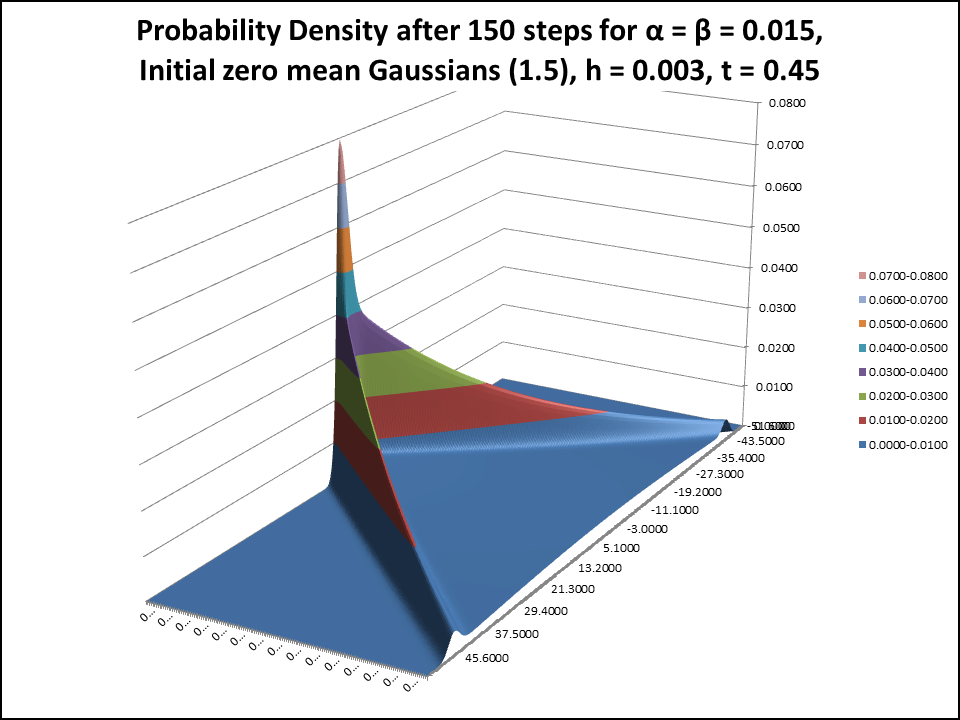}
\caption{3D Node Transition Travel, Mixed Transitions}
\label{SmallMDistribution3a}
\end{figure}

\subsection*{Third Example, mixed transition probabilities,
symmetric} In the next example the transition
probabilities are much larger so that part of the
probability travels to the edges following the speed
requirements but the remaining part of the
distribution is settled in the middle.  Figures
\ref{SmallMDistribution3} and
\ref{SmallMDistribution3a} show a more balanced case
again where the standard deviation of the initial
distributions is put back to 1.5 but now the
transition probabilities are increased to
$\alpha=0.015=1.5\%$, $\beta=0.015=1.5\%$.  Much of
the final distribution is away from the solution bound
-45.00 and 45.00 but some of the particles still end
there.

Comparing Figures \ref{SmallMDistribution3a} with
Figure \ref{SmallMDistribution1a} it is clear that a
larger part of the distribution is located between the
extreme nodes.  Also the size of the distributions at
the -51.9 and 51.9 extreme points is now quite a bit
smaller reflecting the probability diffusion into the
region between the extremes.  As Figure
\ref{SmallMDistribution3} suggests the final
distribution looks vaguely Gaussian but has
substantial "ears" at the extremes.

\subsection*{Fourth Example, large transition probabilities, non-symmetric}
Figures \ref{SmallMDistribution4},
\ref{SmallMDistribution4a} show an example of the grid
building for the case where the transition
probabilities are much bigger and also not
symmetrical. In this case the extreme distributions
will disappear as they can only be obtained if the
particles do not deviate from the straight path which
is unlikely given the 150 steps. Also the recombining
steps generate a distribution that is much more
centered in the middle while showing a curve in the
direction of the largest transition parameter.

Figures \ref{SmallMDistribution4} shows the initial
and final 150 step distribution.  The final
distribution looks Gaussian but has increased in
standard deviation and moved off-center. The initial
distribution was centered around zero and then
migrated to 10.99 at time 0.45 while the standard
deviation started with 1.5 as per our initial
distribution and then changed to about 16.2 at the
final time $t = 0.45$. Due to the distribution
spreading there is always an increase in the standard
deviation and in the next Section we will show how
this depends on time and probabilities.

The migration of the mean is difficult to estimate and
is not equal to $150*(\alpha-\beta)*\Delta x$.  The
$\alpha$ and $\beta$ determine the curvature of the
path but they say nothing about the mean motion or the
"up" or "down" probability.  This "up" and "down"
probability effectively determines the movement of the
mean but they cannot be easily determined without
calculation and for these cases they are time
dependent.

\begin{figure}
\centering
\includegraphics[width=4in]{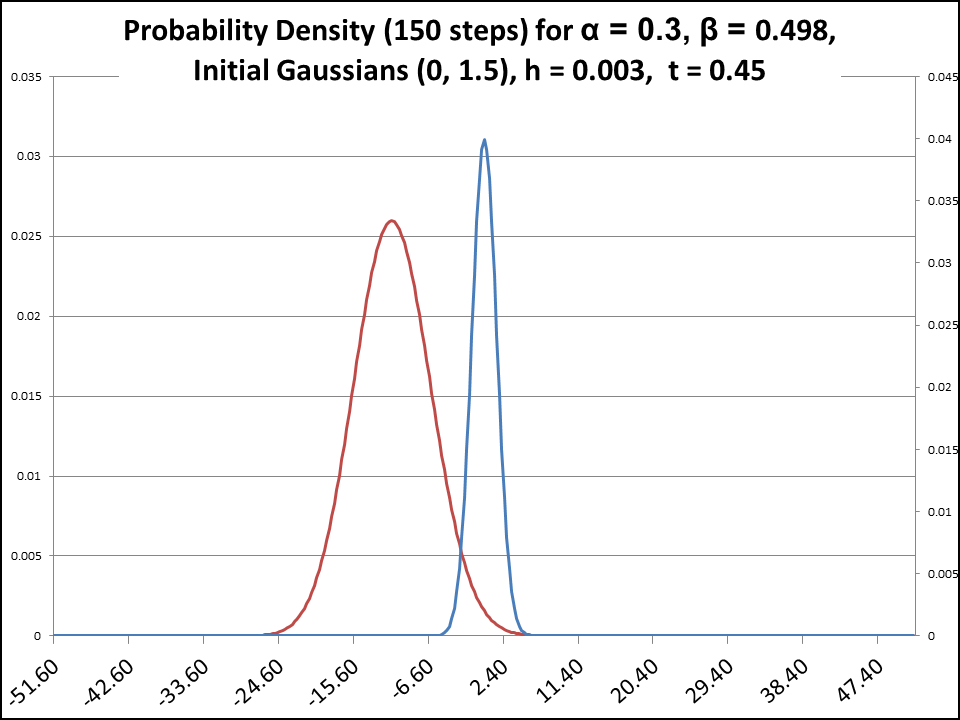}
\caption{Initial, Final Node Densities, Large Transitions}
\label{SmallMDistribution4}
\end{figure}

\begin{figure}
\centering
\includegraphics[width=4in]{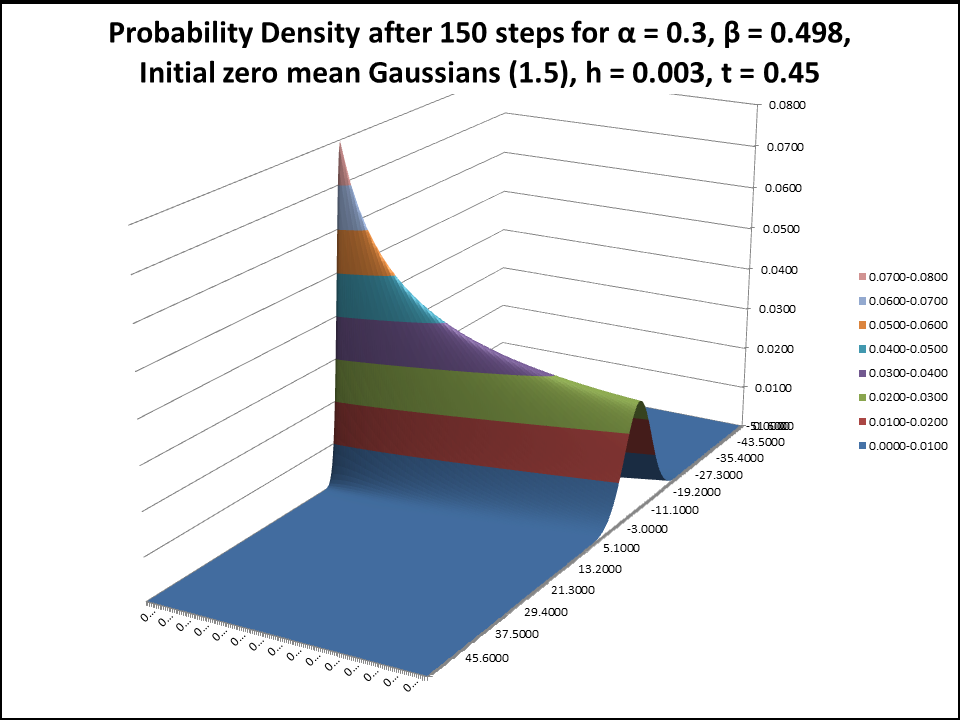}
\caption{3D Node Transition Travel, Large Transitions}
\label{SmallMDistribution4a}
\end{figure}

\section{Continuous Case}
\label{Continuous1}
\subsection*{Continuous Equation}
\label{ContinuousSection} Assuming an appropriate
limit leaving $\Delta / h \rightarrow c$ and in
proportion it is possible to transform equation
\eqref{TRANSITION1} to a continuous equation.  Assume
that $\Delta / h = c$ which means that the speed of
the particle is always constant.  So then a particle
moves only with speed $c$ or equal $-c$.  Additionally
assume that the probabilities $\alpha$ and $\beta$
behave as rates getting smaller with $h$ so that
\begin{align*}
\alpha(t,x)&\rightarrow \alpha(t,x)h
\\
\beta(t,x)&\rightarrow \beta(t,x)h.
\end{align*}
Then a two factor jump process can be constructed for
the position probability of the particle.

To apply these limits to equation \eqref{TRANSITION1}
insert $c$ and change the probabilities as rates.
Expand the $\Delta x$ terms to find that
\begin{align*}
\begin{pmatrix}
q^+(t+h,x) \\
q^-(t+h,x)
\end{pmatrix}=
\begin{pmatrix}
1-\alpha(t,x)h & \beta(t,x)h\\
\alpha(t,x)h & 1-\beta(t,x)h
\end{pmatrix}
\begin{pmatrix}
q^+(t,x)-\Delta x\frac{\partial}{\partial x}q^+(t,x)+O(\Delta^2 x) \\
q^-(t,x)+\Delta x\frac{\partial}{\partial x}q^-(t,x)+O(\Delta^2 x)
\end{pmatrix}
\end{align*}
so that
\begin{align*}
\begin{pmatrix}
q^+(t+h,x) \\
q^-(t+h,x)
\end{pmatrix}
=&
\begin{pmatrix}
1-\alpha(t,x)h & \beta(t,x)h\\
\alpha(t,x)h & 1-\beta(t,x)h
\end{pmatrix}
\begin{pmatrix}
q^+(t,x) \\
q^-(t,x)
\end{pmatrix}
\\
&+\Delta x
\begin{pmatrix}
1-\alpha(t,x)h & \beta(t,x)h\\
\alpha(t,x)h & 1-\beta(t,x)h
\end{pmatrix}
\begin{pmatrix}
-\frac{\partial}{\partial x}q^+(t,x) \\
\frac{\partial}{\partial x}q^-(t,x)
\end{pmatrix}.
\end{align*}

Notice that in the last part of the equation there are
terms $h\Delta x$ that can be ignored as they are
small.  Retaining the main terms yields
\begin{align*}
\begin{split}
\begin{pmatrix}
q^+(t+h,x)- q^+(t,x) \\
q^-(t+h,x)- q^-(t,x)
\end{pmatrix}
=& h\begin{pmatrix}
-\alpha(t,x) & \beta(t,x)\\
\alpha(t,x) &-\beta(t,x)
\end{pmatrix}
\begin{pmatrix}
q^+(t,x) \\
q^-(t,x)
\end{pmatrix}
\\
&+\Delta x
\begin{pmatrix}
-\frac{\partial}{\partial x}q^+(t,x) \\
\frac{\partial}{\partial x}q^-(t,x)
\end{pmatrix}
\end{split}
\end{align*}
so dividing by $h$ and calculating the limit yields
\begin{align}
\label{TRANSITION5}
\begin{split}
\frac{\partial}{\partial t}
\begin{pmatrix}
q^+(t,x) \\
q^-(t,x)
\end{pmatrix}
=& \begin{pmatrix}
-\alpha(t,x) & \beta(t,x)\\
\alpha(t,x) &-\beta(t,x)
\end{pmatrix}
\begin{pmatrix}
q^+(t,x) \\
q^-(t,x)
\end{pmatrix}
+c
\frac{\partial}{\partial x}
\begin{pmatrix}
-q^+(t,x) \\
q^-(t,x)
\end{pmatrix}
\end{split}
\end{align}
taking $\Delta \downarrow 0$,$h \downarrow 0$ and
setting $\Delta / h=c$. The term $h^2$ is ignored as
they are an order of magnitude smaller.

Reorganizing this yields
\begin{align}
\begin{split}
\label{TRANSITION4}
\frac{\partial q^+(t,x)}{\partial t} +
c\frac{\partial q^+(t,x)}{\partial x}+\alpha(t,x)q^+(t,x) &=
\beta(t,x)q^-(t,x)
\\
\frac{\partial q^-(t,x)}{\partial t} - c\frac{\partial
q^-(t,x)}{\partial x}+\beta(t,x)q^-(t,x) &=
\alpha(t,x)q^+(t,x)
\end{split}
\end{align}
which is a set of coupled convection equations flowing
the probability over the grid.  This type of equation
looks like a hyperbolic one-dimensional Telegraph
Equation for the behavior of voltage and current waves
in a lossy transmission line though the signs are
different ~\cite{HARRINGTON} or ~\cite{MITTAL1}.  In
this case there are only initial conditions in the
form of initial densities $q^+(0,x)$, $q^-(0,x)$ and
there are usually no Dirichlet type boundary
conditions (time dependent fixed $x$ boundaries).
Dirichlet boundaries only arise if the probability
flow is restricted in state over time.

\subsection*{Telegraph Equation}
Equation \eqref{TRANSITION4} can be recast in a two
dimensional Telegraph equation. As before let
$\rho(t,x)=q^+(t,x)+q^-(t,x)$ and define
$\phi(t,x)=q^+(t,x)-q^-(t,x)$ then equations
\eqref{TRANSITION5} and \eqref{TRANSITION4} can be
rewritten as
\begin{align*}
\begin{split}
&\frac{\partial}{\partial t}\rho(t,x) = -c\frac{\partial}{\partial
x}\phi(t,x)
\\
&\frac{\partial}{\partial t}\phi(t,x) =-c\frac{\partial}{\partial
x}\rho(t,x)-2\alpha(t,x)q^+(t,x)+2\beta(t,x)q^-(t,x)
\end{split}
\end{align*}
or substituting $q^+(t,x)=(\rho(t,x)+\phi(t,x))/2$ and
$q^-(t,x)=(\rho(t,x)-\phi(t,x))/2$ this reduces to
\begin{align}
\label{TRANSITION7a}
&\frac{\partial}{\partial t}\rho(t,x) = -c\frac{\partial}{\partial
x}\phi(t,x)
\\
\label{TRANSITION7b}
&\frac{\partial}{\partial t}\phi(t,x) =-c\frac{\partial}{\partial
x}\rho(t,x)-\epsilon(t,x)\rho(t,x)-\gamma(t,x)\phi(t,x)
\end{align}
where $\gamma(t,x)=\alpha(t,x)+\beta(t,x)$ and
$\epsilon(t,x)=\alpha(t,x)-\beta(t,x)$.

If $\alpha(t,x)=\alpha, \beta(t,x)=\beta$ (reducing
$\gamma(t,x)$ and  $\epsilon(t,x)$ to constants) this
equation can be further reduced. Denote
$\gamma(t,x)=\gamma, \epsilon(t,x)=\epsilon$ and
taking the time derivative  in \eqref{TRANSITION7a}
equation into the second equation \eqref{TRANSITION7b}
yields
\begin{align*}
\begin{split}
\frac{\partial^2}{\partial t^2}\rho(t,x) =&
-c\frac{\partial}{\partial x}\left(-c\frac{\partial} {\partial
x}\rho(t,x)-\epsilon\rho(t,x)-\gamma \phi(t,x)\right)
\\
=& c^2\frac{\partial^2}{\partial
x^2}\rho(t,x)+c\epsilon\frac{\partial} {\partial x}\rho(t,x)+c\gamma
\frac{\partial}{\partial x}\phi(t,x)
\end{split}
\end{align*}
which becomes
\begin{align}
\label{TRANSITION8}
\begin{split}
\frac{\partial^2}{\partial t^2}\rho(t,x)+ \gamma
\frac{\partial}{\partial t}\rho(t,x)= c^2\frac{\partial^2}{\partial
x^2}\rho(t,x)+c\epsilon\frac{\partial}{\partial x}\rho(t,x)
\end{split}
\end{align}
using \eqref{TRANSITION7a} on the last term in the
equation.

This is a two dimensional Telegraph equation which has
damping in time $t$ as well as in the spatial
coordinate $x$.  Notice various applications involving
voltage and current waves in ~\cite{HOSSEINI1},
~\cite{JAVIDI1} though these typically have Dirichlet
type additional conditions.  Also the Cauchy problem
for the Telegraph equation based on its simulation by
a one-dimensional Markov random evolution has a
similar form to ~\cite{SAMOILENKO1} using Cauchy
boundary conditions.  There are applications of
stochastic processes to biology that have generated
the Telegraph Equation with Cauchy boundaries
~\cite{CODLING1}.

The Telegraph equation in statistics have been
proposed in the literature to describe motions of
particles with finite velocities as opposed to
diffusion-type models. The first contribution in this
area goes back to ~\cite{GOLDSTEIN1}, ~\cite{KAC1}. In
~\cite{IACUS1} and in ~\cite{BOGUNA1} is shown the
Telegraph equations similar to \eqref{TRANSITION4} and
\eqref{TRANSITION8} for local probabilities.  This is
more an equation for the joint probabilities rather
than the conditional probabilities used in this
manuscript. For a Markov process representation using
Telegraph jump processes and market models see
~\cite{RATANOV1}.

This equation can be reduced to the damped
Klein-Gordon as we will see below. Notice that in this
case the initial condition requires an initial
function for $\rho(t,x)$ as well as $\phi(t,x)$
satisfying \eqref{TRANSITION7a} and
\eqref{TRANSITION7b}.

\subsection*{Klein Gordon Equation}
It is relatively straightforward to change equation
\eqref{TRANSITION8} into a Klein-Gordon equation as
follows.

\begin{thm}
\label{THEOREM1} To reduce equation
\eqref{TRANSITION8} further assume that the solution
can be written as
\begin{align}
\label{TRANSITION19}
\rho(t,x)=e^{-\frac{\epsilon}{2c}x-\frac{\gamma}{2}t}\psi(t,x)
\end{align}
for constant $\epsilon, \gamma$ then
\begin{align}
\begin{split}
\label{TRANSITION18}
\frac{\partial^2 \psi(t,x)}{\partial t^2}
&=c^2\frac{\partial^2 \psi(t,x)}{\partial
x^2}+\eta^2\psi
\end{split}
\end{align}
with $4\eta^2=4\alpha\beta=(\gamma^2-\epsilon^2)$.
\end{thm}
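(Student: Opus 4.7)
The plan is to verify the claim by direct substitution of the ansatz \eqref{TRANSITION19} into the damped wave equation \eqref{TRANSITION8}, choosing the exponential pre-factor precisely so that the first-order derivative terms cancel. Write $\rho(t,x) = E(t,x)\psi(t,x)$ with $E(t,x) = e^{-\frac{\epsilon}{2c}x - \frac{\gamma}{2}t}$. Because $\epsilon$ and $\gamma$ are constants, $E$ satisfies the simple logarithmic-derivative identities $\partial_t E = -\frac{\gamma}{2} E$ and $\partial_x E = -\frac{\epsilon}{2c} E$, and this is what makes the rest of the argument mechanical.

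Next I would compute, by applying the product rule twice, the expressions $\partial_t\rho = E(\partial_t\psi - \tfrac{\gamma}{2}\psi)$, $\partial_t^2\rho = E(\partial_t^2\psi - \gamma\,\partial_t\psi + \tfrac{\gamma^2}{4}\psi)$, and the spatial analogues $\partial_x\rho = E(\partial_x\psi - \tfrac{\epsilon}{2c}\psi)$, $\partial_x^2\rho = E(\partial_x^2\psi - \tfrac{\epsilon}{c}\partial_x\psi + \tfrac{\epsilon^2}{4c^2}\psi)$. Substituting into \eqref{TRANSITION8} and dividing through by $E$, I would point out that the coefficient of $\partial_t\psi$ on the left-hand side is $-\gamma + \gamma = 0$ by construction, and the coefficient of $\partial_x\psi$ on the right-hand side is $-c\epsilon + c\epsilon = 0$ for the same reason; this is the whole point of the exponential pre-factor.

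Collecting the zeroth-order terms on each side yields coefficients $\tfrac{\gamma^2}{4} - \tfrac{\gamma^2}{2} = -\tfrac{\gamma^2}{4}$ multiplying $\psi$ on the left and $\tfrac{\epsilon^2}{4} - \tfrac{\epsilon^2}{2} = -\tfrac{\epsilon^2}{4}$ multiplying $\psi$ on the right. Moving these terms to one side leaves $\partial_t^2\psi = c^2\partial_x^2\psi + \tfrac{\gamma^2-\epsilon^2}{4}\psi$, which is \eqref{TRANSITION18} with $4\eta^2 = \gamma^2-\epsilon^2$. To reconcile this with the second identity $4\eta^2 = 4\alpha\beta$, I would just invoke the definitions $\gamma = \alpha+\beta$ and $\epsilon = \alpha-\beta$ and apply the difference-of-squares factorisation $(\alpha+\beta)^2 - (\alpha-\beta)^2 = 4\alpha\beta$.

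There is no substantive obstacle; this is a routine change-of-variable computation. The only subtlety worth emphasising is that the cancellation of the first-order terms requires $\epsilon$ and $\gamma$ to be constants (otherwise derivatives of $E$ would generate extra inhomogeneous terms), which is precisely the standing hypothesis $\alpha(t,x)=\alpha$, $\beta(t,x)=\beta$ under which \eqref{TRANSITION8} itself was derived.
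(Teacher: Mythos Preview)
Your proposal is correct and follows essentially the same route as the paper: direct substitution of the exponential ansatz into \eqref{TRANSITION8}, using the product rule to expand, observing that the first-order $\partial_t\psi$ and $\partial_x\psi$ terms cancel by the choice of exponents, and collecting the remaining zero-order terms to read off $\eta^2=(\gamma^2-\epsilon^2)/4=\alpha\beta$. The only cosmetic difference is that the paper begins with undetermined constants $A,B$ in $e^{Ax+Bt}$ and then solves for them to kill the first derivatives, whereas you start with the stated values and verify the cancellation; the computations are identical.
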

\begin{proof}
Writing the state probability as
\begin{align*}
\rho(t,x)=e^{Ax+Bt}\psi(t,x)
\end{align*}
then for the first two terms in \eqref{TRANSITION8}
\begin{align*}
&\gamma\frac{\partial \rho(t,x)}{\partial t}= \gamma
e^{Ax+Bt}\left(\frac{\partial \psi(t,x)}{\partial t} +B\psi\right)
\\
&\frac{\partial^2 \rho(t,x)}{\partial t^2}=
e^{Ax+Bt}\left(\frac{\partial^2 \psi(t,x)}{\partial
t^2}+2B\frac{\partial \psi(t,x)}{\partial t} +B^2\psi\right)
\end{align*}
or
\begin{align*}
\frac{\partial^2 \rho(t,x)}{\partial t^2}+ \gamma\frac{\partial
\rho(t,x)}{\partial t}&= e^{Ax+Bt}\left(\frac{\partial^2
\psi(t,x)}{\partial t^2}+(2B+\gamma)\frac{\partial
\psi(t,x)}{\partial t} +(B^2+\gamma B)\psi\right)
\\
&=e^{Ax+Bt}\left(\frac{\partial^2 \psi(t,x)} {\partial
t^2}-\frac{\gamma^2}{4}\psi\right)
\end{align*}
if we use $B=-\gamma/2$ to dispense the first
derivative.

Similarly for the two terms on the right
\begin{align*}
&c\epsilon\frac{\partial \rho(t,x)}{\partial x}= c\epsilon
e^{Ax+Bt}\left(\frac{\partial \psi(t,x)}{\partial x} +A\psi\right)
\\
&c^2\frac{\partial^2 \rho(t,x)}{\partial x^2}=
c^2e^{Ax+Bt}\left(\frac{\partial^2 \psi(t,x)}{\partial
x^2}+2A\frac{\partial \psi(t,x)}{\partial x} +A^2\psi\right)
\end{align*}
or
\begin{align*}
c^2\frac{\partial^2 \rho(t,x)}{\partial x^2}+
c\epsilon\frac{\partial \rho(t,x)}{\partial x}&=
e^{Ax+Bt}\left(c^2\frac{\partial^2 \psi(t,x)}{\partial
x^2}+(2c^2A+c\epsilon)\frac{\partial \psi(t,x)}{\partial x}
+Ac(Ac+\epsilon)\psi\right)
\\
&=e^{Ax+Bt}\left(c^2\frac{\partial^2 \psi(t,x)}{\partial
x^2}-\frac{\epsilon^2}{4}\psi\right)
\end{align*}
if again we use $A=-\epsilon/(2c)$ to remove the first
derivative.

These two results reduce equation \eqref{TRANSITION8}
to
\begin{align*}
\frac{\partial^2 \psi(t,x)}{\partial t^2}
&=c^2\frac{\partial^2 \psi(t,x)}{\partial
x^2}+\frac{\gamma^2-\epsilon^2}{4}\psi
=c^2\frac{\partial^2 \psi(t,x)}{\partial x^2}+\eta^2\psi
\end{align*}
with the definition for $\eta$ above.
\end{proof}

There are many solutions to this equation depending on
the type of applications.  One interesting example to
be further discussed below is that the general
solution to \eqref{TRANSITION18} equals
\begin{align*}
\psi(t,x) &= AI_0(\xi) + BK_0(\xi)
\\
\xi&=\frac{\eta}{c}\sqrt{K(t,x)}
\\
K(t,x)&=c^2t^2-x^2
\end{align*}
where $I_0(.),K_0(.)$ are first order modified Bessel
functions. One particularly interesting fact is that
if $\psi(t,x)$ is an equation to \eqref{TRANSITION18}
then
\begin{align}
\label{TRANSITION44}
\psi^*\left(\frac{t-xv/c^2}{\sqrt{\vphantom(1-v^2/c^2)}},
\frac{x-vt}{\sqrt{\vphantom(1-v^2/c^2)}}\right)
\end{align}
is a solution as well for an arbitrary velocity $v$.

Imagine a reasonably large mass such that $\gamma<<1$
moving with speed $v$ then using \eqref{TRANSITION44}
we have that
\begin{align*}
\rho(t,x)=e^{-\frac{\epsilon}{2c}x-\frac{\gamma}{2}t}\psi^*(t,x)
\approx \psi^*(t,x)
\end{align*}
is a probability density hugging the straight path of
the underlying mass moving at speed $v$.  Here
$\gamma$ is small because a larger mass has little
dispersion so that $\gamma \approx 0$ and the $x$ term
in the exponential disappears.  Also it is assumed
that $\epsilon$ is small assuming a very large speed
of $c$.  So $\epsilon / c \approx 0$ and the $x$ term
from the exponential disappears and the approximation
holds.

This Klein-Gordon wave equation is commonly used in
relativistic quantum mechanics to find the spin-less
free particle for a wave function. Equation
\eqref{TRANSITION18} applies to the actual equation
after adjustment and is related to tachyons
\cite{FEINBERG} or transport equations \cite{ANNO1}.

\subsection*{Speed, Forward and Backward}
To find out
about the average position of a particle going through
$x$ we look into the probability exiting from one
node. Using Bayes on equation \eqref{TRANSITION1b} and
\eqref{TRANSITION1} any particle leaving state $x$ at
time $t$ has a velocity of $\pm c$ with probability
\begin{align*}
P[x(t+h)=x+\Delta x | x(t)=x] = \frac{P[x(t+h)=x+\Delta x,x(t)=x]}{P[x(t)=x]}
=\frac{q^+(t,x)}{\rho(t,x)}
\\
P[x(t+h)=x-\Delta x | x(t)=x] = \frac{P[x(t+h)=x-\Delta x,x(t)=x]}{P[x(t)=x]}
=\frac{q^+(t,x)}{\rho(t,x)}
\end{align*}
because $q^\pm(t,x)=P[x(t+h)=x\pm \Delta x,x(t)=x]$ is
the joint probability, see \eqref{TRANSITION1},
\eqref{TRANSITION1b}.

Hence the expectation of the exiting speed of the
particle conditional on residing in $x$ at time $t$
equals
\begin{align}
\label{VELOCITY}
v(t,x)=c\frac{(q^+(t,x)-
q^-(t,x))}{\rho(t,x)}=c\frac{\phi(t,x)}{\rho(t,x)}
\end{align}
where $\phi(t,x)=q^+(t,x)- q^-(t,x)$.  Hence the
difference between the two densities
$q^+(t,x)$,$q^-(t,x)$ also has a physical
interpretation.

For the backward velocity consider the steps in the
grid ending in $x(t)=x$ before taking the average. So
by definition using Bayes again
\begin{align*}
P[x(t-h)=x+\Delta x | x(t)=x] = \frac{P[x(t-h)=x+\Delta x,x(t)=x]}{P[x(t)=x]}
=\frac{q^+(t-h,x+\Delta x)}{\rho(t,x)}
\\
P[x(t-h)=x-\Delta x | x(t)=x] = \frac{P[x(t-h)=x-\Delta x,x(t)=x]}{P[x(t)=x]}
=\frac{q^+(t-h,x-\Delta x)}{\rho(t,x)}
\end{align*}
now concentrating on the probability of ending up in
$x$ coming from $x+\Delta x$ with velocity $-c$ or
coming from $x-\Delta x$ with velocity $c$.

It is possible to define a backward velocity and an
acceleration using an inversion of equation
\eqref{TRANSITION1}.

\begin{thm}
\label{VELOCITY2} The hyperbolic two dimensional
hyperbolic equation \eqref{TRANSITION4}  which has
forward velocity \eqref{VELOCITY} also has an
acceleration equal to
\begin{align*}
a_t(x)&=\lim_{h\downarrow 0}\frac{v(t,x)-v^-(t,x)}{h}
=-c\epsilon(t,x)-v(t,x)\gamma(t,x)
\end{align*}
with $v^-(t,x)$ defined as the speed of the particle
that ends in $x$ - the backward velocity.
\end{thm}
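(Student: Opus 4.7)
The plan is to express $v^-(t,x)$ as a conditional expectation of the incoming velocities, subtract it from $v(t,x)$, expand to first order in $h$ and $\Delta x$, and then simplify the resulting expression using the continuous equations \eqref{TRANSITION7a}--\eqref{TRANSITION7b} established above. First I would fix the backward velocity. A particle arriving at $x(t)=x$ from $x(t-h)=x-\Delta x$ arrived with velocity $+c$, while one arriving from $x(t-h)=x+\Delta x$ arrived with velocity $-c$. Using the joint densities of \eqref{INITIALTRANS1} shifted backwards in time, the Bayes argument in the excerpt yields
\begin{align*}
v^-(t,x)=c\,\frac{q^+(t-h,x-\Delta x)-q^-(t-h,x+\Delta x)}{\rho(t,x)} .
\end{align*}

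Next I would form the difference $v(t,x)-v^-(t,x)$ and Taylor-expand the shifted $q^\pm$ terms in both arguments:
\begin{align*}
q^+(t-h,x-\Delta x)&=q^+(t,x)-h\,\partial_t q^+-\Delta x\,\partial_x q^++O(h^2),\\
q^-(t-h,x+\Delta x)&=q^-(t,x)-h\,\partial_t q^-+\Delta x\,\partial_x q^-+O(h^2).
\end{align*}
Substituting and collecting terms, the numerator in $c(v-v^-)/\rho$ reduces to $h\,\partial_t\phi+\Delta x\,\partial_x\rho$, where $\phi=q^+-q^-$ and $\rho=q^++q^-$. Dividing by $h$ and using $\Delta x/h\to c$ gives
\begin{align*}
\frac{v(t,x)-v^-(t,x)}{h}=\frac{c}{\rho(t,x)}\Bigl(\partial_t\phi(t,x)+c\,\partial_x\rho(t,x)\Bigr)+O(h).
\end{align*}

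The final step is the key simplification: the bracketed quantity is exactly what equation \eqref{TRANSITION7b} controls. Substituting $\partial_t\phi=-c\,\partial_x\rho-\epsilon\rho-\gamma\phi$ cancels the $c\,\partial_x\rho$ term and leaves
\begin{align*}
\frac{v-v^-}{h}\;\longrightarrow\;\frac{c}{\rho}\bigl(-\epsilon\rho-\gamma\phi\bigr)
=-c\,\epsilon(t,x)-\gamma(t,x)\,\frac{c\,\phi(t,x)}{\rho(t,x)}
=-c\,\epsilon(t,x)-v(t,x)\,\gamma(t,x),
\end{align*}
by the definition \eqref{VELOCITY} of $v(t,x)$, which is the claimed formula for $a_t(x)$.

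The routine content is the Taylor expansion and the identification with \eqref{TRANSITION7b}. The only genuine subtlety is bookkeeping the direction associated to each backward joint density (the excerpt contains a clerical slip of writing $q^+$ twice), and making sure the $\Delta x/h\to c$ scaling is applied consistently so that the $c\,\partial_x\rho$ generated by the spatial expansion has exactly the sign needed to cancel the corresponding term in the telegraph equation. Once those signs are correct, the rest is mechanical.
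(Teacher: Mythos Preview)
Your argument is correct, but it takes a different route from the paper. The paper stays at the discrete level: it inverts the $2\times 2$ transition matrix in \eqref{TRANSITION1} explicitly to obtain
\[
\begin{pmatrix}q^+(t-h,x-\Delta x)\\ q^-(t-h,x+\Delta x)\end{pmatrix}
=\frac{1}{1-h\gamma}
\begin{pmatrix}q^+(t,x)-h\beta\rho(t,x)\\ q^-(t,x)-h\alpha\rho(t,x)\end{pmatrix},
\]
which gives the closed form $v^-(t,x)=\dfrac{v(t,x)+hc\,\epsilon}{1-h\gamma}$, and then expands in $h$. You instead Taylor-expand the shifted densities directly and recognise the resulting combination $\partial_t\phi+c\,\partial_x\rho$ as the left side of \eqref{TRANSITION7b}, letting the continuous PDE do the cancellation. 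Your route is shorter and makes transparent that the acceleration formula is nothing but a restatement of the $\phi$-equation; the paper's route has the merit of producing an exact discrete expression for $v^-$ before any limit is taken, so one sees the $1/(1-h\gamma)$ factor that generates the $v\gamma$ term. Both are sound and lead to the same limit.
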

\begin{proof}
Using the Bayes argument as before the backward
velocity becomes
\begin{align*}
v^-(t,x) = c\left(P[x(t-h)=x-\Delta x | x(t)=x]-P[x(t-h)=x+\Delta x | x(t)=x]\right)
\end{align*}
which equals
\begin{align*}
v^-(t,x) &= c\left(\frac{P[x(t-h)=x-\Delta x, x_t=x]}
{P[x_t=x]}-\frac{P[x(t-h)=x+\Delta x, x_t=x]}{P[x_t=x]}\right)
\\
&=c\left(\frac{q^+(t-h,x-\Delta x)}{\rho(t,x)}-\frac{q^-(t-h,x+\Delta x)}{\rho(t,x)}\right)
\\
&=\frac{c}{\rho(t,x)}\left(q^+(t-h,x-\Delta x)-q^-(t-h,x+\Delta x)\right).
\end{align*}

This can be simplified by means of inverting equation \eqref{TRANSITION1} since then
\begin{align}
\label{TRANSITION37}
\begin{split}
\begin{pmatrix}
q^+(t-h,x-\Delta x) \\
q^-(t-h,x+\Delta x)
\end{pmatrix}
&=
\begin{pmatrix}
1-\alpha(t-h,x)h & \beta(t-h,x)h\\
\alpha(t-h,x)h & 1-\beta(t-h,x)h
\end{pmatrix}^{-1}
\begin{pmatrix}
q^+(t,x) \\
q^-(t,x)
\end{pmatrix}
\\
&=\frac{1}{D_{t-h}}
\begin{pmatrix}
 1-\beta(t-h,x)h & -\beta(t-h,x)h\\
-\alpha(t-h,x)h & 1-\alpha(t-h,x)h
\end{pmatrix}
\begin{pmatrix}
q^+(t,x) \\
q^-(t,x)
\end{pmatrix}
\\
&=\frac{1}{D_{t-h}}
\begin{pmatrix}
 (1-h\beta(t-h,x))q^+(t,x) -h\beta(t-h,x)q^-(t,x) \\
-hq^+(t,x)\alpha(t-h,x) + (1-h\alpha(t-h,x))q^-(t,x)
\end{pmatrix}
\\
&=\frac{1}{D_{t-h}}
\begin{pmatrix}
 q^+(t,x)-h\beta(t-h,x)\rho(t,x) \\
q^-(t,x)-h\alpha(t-h,x)\rho(t,x)
\end{pmatrix}
\end{split}
\end{align}
with the matrix determinant
\begin{align*}
D_{t-h}&=
(1-h\beta(t-h,x))(1-h\alpha(t-h,x)) -h^2\beta(t-h,x)\alpha(t-h,x)
\\
&=\left(1-h(\beta(t-h,x)+\alpha(t-h,x))\right)=1-h\gamma(t-h,x).
\end{align*}

Substituting this into equation \eqref{TRANSITION37}
yields
\begin{align*}
v^-(t,x)&=\frac{c}{\rho(t,x)}\left(q^+(t-h,x-\Delta x)-q^-(t-h,x+\Delta x)\right)
\\
&=\frac{c}{D_{t-h}\rho(t,x)}\left( q^+(t,x)-q^-(t,x)+h(\alpha(t-h,x)-\beta(t-h,x))\rho(t,x)\right)
\\
&=\frac{v(t,x)+ch\epsilon(t-h,x)}{D_{t-h}}=\frac{v(t,x)+hc\epsilon(t-h,x)}{1-h\gamma(t-h,x)}
\end{align*}
using $\epsilon(t-h,x)=\alpha(t-h,x)-\beta(t-h,x)$.

Now expanding in rising orders of $h$
\begin{align}
\begin{split}
\label{TRANSITION26}
v^-(t,x)&=\frac{v(t,x)+ch\epsilon(t-h,x)h}{1-h\gamma(t-h,x)}
\\
&=(v(t,x)+ch\epsilon(t-h,x))(1+h\gamma(t-h,x) + ...)
\\
&=v(t,x)+c\epsilon(t-h,x)h+v(t,x)\gamma(t-h,x)h + ...
\end{split}
\end{align}
or in the limit
\begin{align*}
a_t(x)&=\lim_{h\downarrow 0}\frac{v(t,x)-v^-(t,x)}{h}
=-c\epsilon(t,x)-v(t,x)\gamma(t,x)
\end{align*}
using $\epsilon(t,x)=\lim_{h\downarrow
0}(\alpha(t-h,x)-\beta(t-h,x)$,
$\gamma(t,x)=\lim_{h\downarrow
0}(\alpha(t-h,x)+\beta(t-h,x))$.
\end{proof}

This is therefore a practical definition of
acceleration through point $x$.  Notice that the
matters of limits and continuity for $\alpha(t,x)$,
$\beta(t,x)$ are technical issues which have been
ignored in this proof.

\section{Solutions}
\label{Solutions1}
Equation \eqref{TRANSITION4}, the
Telegraph equation \eqref{TRANSITION8} and the
Klein-Gordon equations \eqref{TRANSITION18} have
solutions for the case where there are specific
initial Cauchy conditions.  This involves
distributions for the initial condition and the
initial velocity.

The situation here is slightly different from the
usual applications in electromagnetism, investment or
previous probabilistic studies in that the
distribution is defined as the sum of $q^+(t,x)$,
$q^-(t,x)$ and no initial distribution is normally
provided for the Telegraph equation.

However, it is quite straightforwar5d to show the
following.

\begin{thm}
Using the notation from Theorem \ref{THEOREM1} with
$\eta^2 = (\beta^2 - \alpha^2)/4)$ with initial
conditions
\begin{align*}
\rho(0,x)=q^+(0,x)+q^-(0,x)
\end{align*}
then the solution to \eqref{TRANSITION4} and
\eqref{TRANSITION8} is given by
\begin{align}
\begin{split}
\label{TRANSITION45}
q^+(t,x)=e^{-\frac{\epsilon}{2c}x-\frac{\gamma}{2}t}\psi^+(t,x)
\\
q^-(t,x)=e^{-\frac{\epsilon}{2c}x-\frac{\gamma}{2}t}\psi^-(t,x)
\\
\rho(t,x)=q^+(t,x)+q^-(t,x)
\end{split}
\end{align}
where
\begin{align*}
\psi^\pm(t,x)=&\frac{\left(\psi^\pm(x+ct)+\psi^\pm(x-ct)\right)}{2}+
\frac{ct\eta}{2}\int_{x-ct}^{x+ct}\psi^\pm(z)
\frac{I_0(\frac{\eta}{c}\xi(t,x-z))}{\xi(t,x-z)}dz
\end{align*}
with
\begin{align*}
\psi^\pm(x)&=e^{\frac{\epsilon}{2c}x}q^\pm(0,x)
\\
\xi&=\sqrt{c^2t^2-x^2}.
\end{align*}
\end{thm}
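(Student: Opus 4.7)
The plan is to establish that each of the marginal densities $q^+$ and $q^-$ individually satisfies the same damped Telegraph equation \eqref{TRANSITION8} that $\rho$ does, so that Theorem \ref{THEOREM1} applies componentwise, and then to invoke the classical Riemann-function representation for the one-dimensional Klein-Gordon equation.

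First I would derive a decoupled second-order equation for $q^+$ (and symmetrically for $q^-$) from the first-order system \eqref{TRANSITION4}. Differentiating the first equation in $t$ and substituting the second to eliminate $\partial_t q^-$, then using the first equation itself to express $\beta q^-$ and $\beta\partial_x q^-$ in terms of $q^+$ and its derivatives, one obtains after collecting terms
$$\partial_{tt}q^\pm + \gamma\,\partial_t q^\pm = c^2\partial_{xx}q^\pm + c\epsilon\,\partial_x q^\pm,$$
the crucial point being that two $\alpha\beta q^+$ contributions (one from $\beta^2 q^-$, the other from the direct coupling after substitution) cancel exactly. Because the proof of Theorem \ref{THEOREM1} used only the form of \eqref{TRANSITION8} and not the identity $\rho=q^++q^-$, it applies verbatim to each component: setting $\psi^\pm(t,x)=e^{\epsilon x/(2c)+\gamma t/2}\,q^\pm(t,x)$ reduces the system to two copies of the Klein-Gordon equation $\partial_{tt}\psi^\pm = c^2\partial_{xx}\psi^\pm + \eta^2\psi^\pm$ with $\eta^2=\alpha\beta=(\gamma^2-\epsilon^2)/4$.

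The Cauchy data $\psi^\pm(0,x)=e^{\epsilon x/(2c)}\,q^\pm(0,x)$ are immediate; the companion datum $\partial_t\psi^\pm(0,x)$ is determined by evaluating \eqref{TRANSITION4} at $t=0$, which expresses $\partial_t q^\pm(0,x)$ algebraically in terms of $q^+(0,x)$, $q^-(0,x)$ and their $x$-derivatives. I would then apply the classical Riemann-function solution for the one-dimensional Klein-Gordon equation with positive mass-like term, whose kernel on the backward characteristic cone $c^2t^2-(x-z)^2\ge 0$ is the modified Bessel function $I_0\!\left(\tfrac{\eta}{c}\sqrt{c^2t^2-(x-z)^2}\right)$. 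Integrating this kernel against the two pieces of Cauchy data over $[x-ct,x+ct]$, simplifying by means of the compatibility furnished by \eqref{TRANSITION4}, and undoing the exponential substitution yields the claimed representation \eqref{TRANSITION45}.

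The principal obstacle is the decoupling step: it depends critically on $\alpha$ and $\beta$ being \emph{constant}, since otherwise spatial and temporal derivatives of the coefficients survive and the cancellation of the $\alpha\beta q^+$ terms is destroyed. A secondary but routine verification is that the prefactor $ct\eta/2$ together with the kernel $I_0(\cdot)/\xi$ appearing in \eqref{TRANSITION45} are consistent with the standard Riemann-function formula once the two Cauchy contributions have been consolidated through the first-order compatibility; this is a direct manipulation of the known Poisson-type solution for the hyperbolic (positive-mass-term) Klein-Gordon operator.
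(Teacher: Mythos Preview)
Your approach is essentially the paper's: it too shows that $q^\pm$ individually satisfy the Telegraph equation \eqref{TRANSITION8} (arguing by linearity---both $\rho$ and $\phi$ satisfy it, hence so do $q^\pm=(\rho\pm\phi)/2$---rather than by your direct algebraic decoupling of \eqref{TRANSITION4}), then applies Theorem \ref{THEOREM1} componentwise and quotes the classical Cauchy solution formula for the Klein-Gordon equation with the $I_0$ Riemann kernel. The only substantive difference is in the treatment of the initial time-derivative datum: the paper simply drops the $\theta$-contribution in the Riemann formula, whereas you (more carefully) observe that $\partial_t\psi^\pm(0,x)$ is fixed by \eqref{TRANSITION4} at $t=0$ and must be consolidated with the position datum via the first-order compatibility.
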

\begin{proof}
Equation \eqref{TRANSITION18} presents the
Klein-Gordon equation for a function of the overall
space probability density $\rho(t,x)$.  This was
derived from \eqref{TRANSITION19} and
\eqref{TRANSITION8}. However, it is possible to use
the equations to show that the same Telegraph equation
\eqref{TRANSITION8} applies for $\phi(t,x)$. However
in that case the initial conditions would be
different.

Since both the addition $\rho(t,x)$ and the difference
$\phi(t,x)$ satisfy the Telegraph equation their
difference and their addition also satisfies the
Telegraph equation \eqref{TRANSITION8}.  Since
$q^+(t,x)=(\rho(t,x)+\phi(t,x))/2$,
$q^-(t,x)=(\rho(t,x)-\phi(t,x))/2$ also satisfy the
Telegraph equation once transformed with the
exponential \eqref{TRANSITION19} relation they also
satisfy the Klein-Gordon equation.  Hence $\rho(t,x),
\phi(t,x), q^+(t,x),q^-(t,x)$ all satisfy the
Klein-Gordon equation though with different
conditions.

The solution to the Klein-Gordon equation under the
Cauchy conditions is
\begin{align*}
\psi(t,x)=&\frac{(\psi(x+ct)+\psi(x-ct))}{2}+
\frac{ct\eta}{2}\int_{x-ct}^{x+ct}\psi(\xi)\frac{I_0(\frac{\eta}{c}\sqrt{\vphantom(c^2t^2-(x-\xi)^2})}
{\sqrt{\vphantom(c^2t^2-(x-\xi)^2}}d\xi
\\
&+\frac{1}{2c}\int_{x-ct}^{x+ct}
\theta(\xi)K_0(m\sqrt{\vphantom(c^2t^2-(x-\xi)^2}))d\xi
\end{align*}
where
\begin{align*}
\psi(0,x)=\psi(x)
\\
\frac{\partial}{\partial t}\psi(0,x)=\theta(x).
\end{align*}

Given the fact that we do not have a conditional
derivative we can drop the $\theta(t,x)$ term above
and put in the appropriate initial condition for the
individual densities.  Using \eqref{TRANSITION19}
these initial condition are
\begin{align*}
q^+(0,x)&=e^{-\frac{\epsilon}{2c}x}\psi^+(x)
\\
&\text{or}
\\
q^-(0,x)&=e^{-\frac{\epsilon}{2c}x}\psi^-(x)
\end{align*}
hence inverting this yields
$\psi^\pm(0,x)=e^{\frac{\epsilon}{2c}x}q^\pm(0,x)$
which are the initial conditions provided in
\eqref{TRANSITION45}.

The resulting equation then is a solution to the
Klein-Gordon equation using the initial condition for
$q^+(0,x),q^-(0,x)$.  To find the solution for the
Telegraph equation we need the exponential as in
\eqref{TRANSITION19} for either initial case which
explains the equations in \eqref{TRANSITION45}. The
equations for $q^=(t,x)$, $q^-(t,x)$ are one factor
equations and so the procedure above should find a
unique solution.
\end{proof}

On the other hand if $\gamma$ and $c$ become large so
that $\gamma,c >> 1$ then equation \eqref{TRANSITION8}
can be rewritten as
\begin{align}
\label{TRANSITION9}
\begin{split}
\frac{\partial}{\partial t}\rho(t,x)=
\frac{c^2}{\gamma}\frac{\partial^2}{\partial x^2}\rho(t,x)+
\frac{c\epsilon}{\gamma}\frac{\partial}{\partial x}\rho(t,x) -
\frac{1}{\gamma}\frac{\partial^2}{\partial t^2}\rho(t,x)
\end{split}
\end{align}
and the last term in the limit vanishes (since $1/\gamma << 1$) to
yield
\begin{align*}
\begin{split}
\frac{\partial}{\partial t}\rho(t,x)=
\frac{c^2}{\gamma}\frac{\partial^2}{\partial x^2}\rho(t,x)+
\frac{c\epsilon}{\gamma}\frac{\partial}{\partial x}\rho(t,x)
\end{split}
\end{align*}
which is a diffusion equation with variance $\sigma^2
= \dfrac{c^2}{\gamma}$ and drift
$-\dfrac{c\epsilon}{\gamma}$.

In fact, if in addition $\alpha=\beta$ then
$\epsilon=\alpha-\beta=0$ and $\gamma=2\alpha$ so equation
\eqref{TRANSITION9} reduces to
\begin{align*}
\begin{split}
\frac{\partial}{\partial t}\rho(t,x)=
\frac{c^2}{\gamma}\frac{\partial^2}{\partial x^2}\rho(t,x)=
\frac{c^2}{2\alpha}\frac{\partial^2}{\partial x^2}\rho(t,x)
\end{split}
\end{align*}
which is the quintessential diffusion.

The result is not surprising since a large $\gamma$ indicates a very
high probability of switching velocities in any state $x$.  The
typical Brownian path has a very high (infinite) velocity (here
equal to $c$) but changes directions at a very high (infinitely
large) rate.

Some properties can be seen more or less from equation
\eqref{TRANSITION8}.
\begin{thm}
For small $\gamma$ the velocity of the mean remains
equal to its initial value while for large $\gamma$
and large $c$ the mean speed of the mean becomes
\begin{align*}
v_0=-\frac{c\epsilon}{\gamma}.
\end{align*}
Also
\begin{align*}
E[x^2(t)] \approx
\frac{2c^2t}{\gamma}+E[x(t)]^2
\end{align*}
for $\gamma >> 1$ and $\gamma t >>1$.
\end{thm}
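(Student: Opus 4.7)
The natural strategy is to extract moment equations directly from the Telegraph equation \eqref{TRANSITION8}, bypassing any explicit solve for $\rho$. Set $\mu(t) = \int x\, \rho(t,x)\, dx$ and $M_2(t) = \int x^2\, \rho(t,x)\, dx$, and assume enough decay of $\rho$ and $\partial_x \rho$ at infinity that all boundary terms from integration by parts vanish---a condition inherited from the initial data $q^{\pm}(0,x)$ and preserved by the finite speed of propagation $c$.

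The first step is to multiply \eqref{TRANSITION8} by $x$ and integrate; the identities $\int x\, \partial_x^2 \rho\, dx = 0$ and $\int x\, \partial_x \rho\, dx = -1$ collapse the Telegraph equation to the linear ODE
\begin{align*}
\ddot\mu(t) + \gamma\, \dot\mu(t) = -c\epsilon,
\end{align*}
whose explicit solution is $\dot\mu(t) = -c\epsilon/\gamma + \bigl(\dot\mu(0) + c\epsilon/\gamma\bigr)e^{-\gamma t}$. The two regimes then fall out immediately: for $\gamma t \ll 1$ the exponential is essentially unity and $\dot\mu(t) \approx \dot\mu(0)$, while for $\gamma t \gg 1$ the transient dies and $\dot\mu(t) \to v_0 = -c\epsilon/\gamma$ as claimed. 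The initial value $\dot\mu(0)$ is determined by the initial data via \eqref{TRANSITION7a}, giving $\dot\mu(0) = c\int(q^+(0,x) - q^-(0,x))\, dx$.

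For the variance statement I would next multiply \eqref{TRANSITION8} by $x^2$ and integrate, obtaining $\ddot M_2 + \gamma \dot M_2 = 2c^2 - 2c\epsilon\, \mu$ via $\int x^2 \partial_x^2 \rho\, dx = 2$ and $\int x^2 \partial_x \rho\, dx = -2\mu$, and then pass to the variance $V(t) = M_2(t) - \mu(t)^2$. Using $\ddot\mu = -c\epsilon - \gamma\dot\mu$ from the first ODE to eliminate $\ddot\mu$ in $\ddot V = \ddot M_2 - 2\dot\mu^2 - 2\mu\ddot\mu$, the $c\epsilon\mu$ and $\gamma\mu\dot\mu$ cross terms telescope cleanly and leave the much tidier identity
\begin{align*}
\ddot V(t) + \gamma\, \dot V(t) = 2\bigl(c^2 - \dot\mu(t)^2\bigr).
\end{align*}
In the regime $\gamma \gg 1$, $\gamma t \gg 1$ we have $\dot\mu \to -c\epsilon/\gamma$, so $\dot\mu^2 = O(c^2\epsilon^2/\gamma^2)$ is negligible next to $c^2$ provided $\epsilon \ll \gamma$; the equation becomes $\ddot V + \gamma\dot V \approx 2c^2$, whose steady state is $\dot V \approx 2c^2/\gamma$. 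Integrating gives $V(t) \approx 2c^2 t/\gamma$, which is precisely $E[x^2(t)] \approx 2c^2 t/\gamma + E[x(t)]^2$.

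The calculus itself is routine; the main obstacles are bookkeeping rather than hard analysis. First, justifying vanishing boundary terms in each integration by parts requires an a priori moment bound on $\rho$, which should follow from rapidly decaying (or compactly supported) initial conditions together with the finite-speed light cone. Second, the small-$\gamma$ clause must be interpreted carefully: the exact mean-velocity ODE shows $\dot\mu$ drifts linearly in $t$ whenever $c\epsilon \neq 0$, so \emph{``velocity of the mean remains equal to its initial value''} is meaningful only in the short-time regime $\gamma t \ll 1$, or under the implicit assumption $\epsilon \ll 1$ that is consistent with the paper's convention that $\alpha, \beta$ are small transition rates.
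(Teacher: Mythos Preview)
Your argument is correct and follows the paper for the first-moment part essentially line for line: multiply \eqref{TRANSITION8} by $x$, integrate, solve the resulting linear ODE for $\dot\mu$, and read off the two regimes. Where you diverge is in the treatment of the second moment. The paper stays with $M_2$ itself, solving $\ddot M_2 + \gamma \dot M_2 = 2c^2 - 2c\epsilon\,\mu$ by an integrating factor to obtain the Duhamel-type expression
\[
\dot M_2(t) = e^{-\gamma t}\dot M_2(0) + \frac{2c^2}{\gamma}\bigl(1-e^{-\gamma t}\bigr) - 2c\epsilon \int_0^t e^{-\gamma(t-s)}\mu(s)\,ds,
\]
then substitutes the approximation $\mu(s) \approx \mu(0) + v_\infty s$ into the convolution and simplifies term by term. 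You instead pass immediately to the variance $V = M_2 - \mu^2$ and use the first-moment ODE to cancel the $c\epsilon\mu$ and $\gamma\mu\dot\mu$ cross terms, arriving at the closed identity $\ddot V + \gamma \dot V = 2(c^2 - \dot\mu^2)$. This is genuinely tidier: the forcing is already asymptotically constant, no convolution integral needs to be estimated, and the result $\dot V \to 2c^2/\gamma$ drops out without any algebraic bookkeeping. The paper's route is more pedestrian but has the minor advantage of tracking $M_2$ explicitly, which makes the transient terms visible; your route hides those but delivers the asymptotic statement with less effort and fewer opportunities for arithmetic slips.
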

\begin{proof}
Integrating equation \eqref{TRANSITION8} shows that
\begin{align*}
\frac{d^2}{ dt^2}\int_{-\infty}^{\infty} x\rho(t,x)dx+ \gamma
\frac{d}{dt}\int_{-\infty}^{\infty}x\rho(t,x)dx= -c\epsilon
\end{align*}
so that
\begin{align*}
\frac{d^2}{ dt^2}E[x(t)]+\gamma \frac{d}{dt}E[x(t)]=-c\epsilon.
\end{align*}

The solution to this is straightforward
\begin{align}
\label{TRANSITION10}
v_t = \frac{d}{dt}E[x(t)]=-\frac{c\epsilon}{\gamma}+
\left(v_0+\frac{c\epsilon}{\gamma}\right)e^{-\gamma
t}
\end{align}
where
\begin{align*}
v_0 = \frac{d}{dt}E[x(0)]
\end{align*}
a simple reference to the initial velocity.  Clearly
then the particle has initial velocity $v_0$ but after
a fair time $\gamma t <<1$ we get
$v_\infty=-\frac{c\epsilon}{\gamma}$.

Integrating \eqref{TRANSITION10} once once more yields
\begin{align*}
E[x(t)]&=E[x(0)]-\frac{c\epsilon t}{\gamma}+
\left(v_0+\frac{c\epsilon}{\gamma}\right)\frac{1-e^{-\gamma
t}}{\gamma}
\\
&\approx E[x(0)]-\frac{c\epsilon t}{\gamma} \approx E[x(0)] + v_\infty t
\end{align*}
as long as $\gamma >> 1$ and $\gamma t >> 1$.

Equation \eqref{TRANSITION8} also shows that
\begin{align*}
\frac{d^2}{ dt^2}\int_{-\infty}^{\infty} x^2\rho(t,x)dx+ \gamma
\frac{d}{dt}\int_{-\infty}^{\infty}x^2\rho(t,x)dx= 2c^2-2c\epsilon
E[x(t)]
\end{align*}
so that
\begin{align}
\label{TRANSITION17}
\frac{d}{dt}E[x^2(t)]=e^{-\gamma t}
E[x^2(0)]+\frac{2c^2}{\gamma}\left(1-e^{-\gamma
t}\right)-2c\epsilon\int_0^te^{-\gamma (t-s)}E[x(s)]ds
\end{align}

The last term in this equation becomes
\begin{align*}
\int_0^te^{-\gamma (t-s)}E[x(s)]ds
&\approx \int_0^te^{-\gamma
(t-s)}\left(E[x(0)]-\frac{c\epsilon s}{\gamma}\right)ds
\\
&=\frac{E[x(0)]}{\gamma}\left(1-e^{-\gamma
t}\right)-\frac{c\epsilon}{\gamma}\left(\frac{t}{\gamma}-\frac{1-e^{-\gamma
t}}{\gamma^2}\right)
\\
&\approx \frac{E[x(0)]}{\gamma} -\frac{c\epsilon t}{\gamma^2}
\end{align*}
so then equation \eqref{TRANSITION17} becomes
\begin{align*}
\frac{d}{dt}E[x^2(t)]
&\approx \frac{2c^2}{\gamma}-\frac{2c\epsilon
E[x(0)]}{\gamma} +\frac{2c^2\epsilon^2 t}{\gamma^2}
\\
&\approx \frac{2c^2}{\gamma} - \frac{2c\epsilon}{\gamma}\left(
E[x(0)] - \frac{c\epsilon t}{\gamma}\right)
\\
&\approx \frac{2c^2}{\gamma} + v_\infty\left(
E[x(0)] + v_\infty t\right)
\\
&\approx \frac{2c^2}{\gamma} + \frac{d}{dt}E[x(t)]^2
\end{align*}
so that
\begin{align*}
E[x^2(t)] \approx
\frac{2c^2t}{\gamma}+E[x(t)]^2
\end{align*}
\end{proof}

This result shows that the distribution is consistent
with a Gaussian diffusion as the variance increases
with time while the remaining terms are of order
$1/\gamma$.

However, if $\alpha(t,x)$ and $\epsilon(t,x)$ depend
on the state $x$ and time $t$ explicitly then a
different approach is required.

\section{Multi-dimensional Case}
\label{MultiDimensional1}
\subsection*{Infinite Dimensional Equation}
It is easy to generalize the previous algorithm in
\eqref{TRANSITION1} by introducing both more states to
step into and more states to come from.  In the
multi-dimensional case the state $x(t)$ is connected
to many other states $x(t+h)=x+j\Delta x$. Define
\begin{align*}
q^j(t,x)=P[x(t+h)=x+j\Delta x,x(t)=x],\text{ for }j=...,-1,0,1,...
\end{align*}
as the set of joint speed and position probabilities
generalizing \eqref{INITIALTRANS1}.  To generalize
\eqref{TRANSITION1b} let probability being in
$x(t+h)=x$ and stepping to $x+k\Delta x$ at $t+h$
after traveling from $x(t)=x+j\Delta x$ at $t-h$ equal
\begin{align*}
\begin{matrix}
\omega_{jk}(t,x)=P_t[x+j\Delta x,x,x+k\Delta x]
\end{matrix}
\end{align*}
with $\sum_j\omega_{jk}=1$.  Then the equivalent to
\eqref{TRANSITION1} reduces to
\begin{align}
\label{TRANSITION21}
q^j(t+h,x) =\sum_k \omega_{jk}q^k(t,x-k\Delta x)
\end{align}
for $j,k=...,-1,0,1,...$.  This implies that
$q^j(t,x)$ is the joint distribution of being in
position $x_t=x$ and making a step the size $j\Delta
x,j=...,-1,0,1,...$.  Notice that this means that the
particle has a velocity of $v_j=j\Delta
x/h=jc,j=...,-1,0,1,...$.

As a result $\sum_j q^j(t,x)=\rho(t,x)$ is the
marginal probability of the particle residing in $x$
and so adding the equations in \eqref{TRANSITION21}
yields
\begin{align*}
\rho(t+h,x) = \sum_j q^j(t+h,x)
\end{align*}
so that
\begin{align*}
\sum_n\rho(t+h,n\Delta x) &= \sum_n \sum_j q ^j(t,x-n\Delta x)
\\
&=\sum_j \sum_n q ^j(t,x-n\Delta x)
\\
&=\sum_j q ^j(t,x)=1
\end{align*}
which shows that state probability is conserved.

To create an equation we now assume that the $\omega$
matrix becomes a rate similar to the $\alpha$, $\beta$
in Section \ref{ContinuousSection}.  Substitute
$\omega_{jk} \rightarrow \delta_{jk}+h\omega_{jk}$
then equation \eqref{TRANSITION21} becomes
\begin{align}
\label{TRANSITION46}
q^j(t+h,x) =\sum_k \left(\delta_{jk}+h\omega_{jk}\right)q^k(t,x-k\Delta x)
\end{align}
for all appropriate indices $j=...,-1,0,1,...$.  Now
$\omega_{jk}$ is a rate matrix which has positive
values for all off-diagonal elements and has negative
values on the diagonal with $\sum_j \omega_{jk}=0$.

To apply this to equation \eqref{TRANSITION21} expand
on the small terms $\Delta x$ and write
\begin{align*}
q^j(t+h,x)=\sum_k (\delta_{jk}+\omega_{jk}h) \left(q^k(t,x)
-k\Delta x\frac{\partial}{\partial x}q^k(t,x)\right)
\end{align*}
where $\sum_j \omega_{jk}=0$ for all $k$.

Retaining the main terms then yields
\begin{align*}
q^j(t+h,x)- q^j(t,x) =& h\sum_k\omega_{jk}q^k(t,x)
-j\Delta x\frac{\partial}{\partial x}q^j(t,x)
\end{align*}
and so finally in the limit
\begin{align}
\begin{split}
\label{TRANSITION30}
\frac{\partial}{\partial t} q^j(t,x)+v_j\frac{\partial}{\partial x} q^j(t,x)
=\sum_k \omega_{jk}q^k(t,x), \text{        $j=..., -1,0,1,...$}
\end{split}
\end{align}
were $v_j=j\Delta x/h$ as above and where the terms
$v_jh$ and $h^2$ can be ignored as they are an order
of magnitude smaller.  This is a set of coupled
advection equations.

\subsection*{Speed, Forward}
The definition of average speed $v(t,x)$ translates
directly from equation \eqref{VELOCITY}
\begin{align}
\label{VELOCITY1}
v(t,x)=\frac{\sum_j v_jq^j(t,x)}{\rho(t,x)}=c\frac{\sum_j jq^j(t,x)}{\rho(t,x)}
\end{align}
which also implies that
\begin{align*}
\sum_j v_jq^j(t,x)=v(t,x)\rho(t,x).
\end{align*}
Notice here we use the definition $\Delta x / h=c$.

Using this equation it is clear that
\begin{align*}
\frac{\partial}{\partial t}
\rho(t,x)+ \frac{\partial}{\partial x}\sum_j v_jq^j_{t}(x)=0
\end{align*}
so that
\begin{align}
\label{TRANSITION24}
\frac{\partial}{\partial t}
\rho(t,x)+ \frac{\partial}{\partial x}\left(v(t,x)\rho(t,x)\right)=0.
\end{align}

This equation is the continuity equation which is true
for any distribution no matter what the choices are
for velocities $v_k$, model size or otherwise.  The
remaining models depend on the choices of $\omega$ and
for some matrix configuration we can simulate the
Newtonian system.

\begin{thm}
\label{POTENTIAL1}
 Let the probability matrix in
\eqref{TRANSITION30} equal
\begin{align*}
w =
\begin{pmatrix}
\ddots & \alpha & 0 & 0
\\
\beta & -\lambda & \alpha & 0
\\
0 & \beta & -\lambda & \alpha
\\
0 & 0 & \beta & -\lambda
\\
0 & 0 & 0 & \beta
\end{pmatrix}
\end{align*}
with $\lambda=\alpha+\beta$ and let
\begin{align}
\label{TRANSITION31}
\alpha -\beta = \frac{1}{c}\frac{\partial V}{\partial x}
\end{align}
then
\begin{align*}
\frac{\partial^2}{\partial t^2}E\left[x(t)\right]
=E\left[\frac{\partial V}{\partial x}\right]
\end{align*}
so that the average motion of the particle follows
Newton's equation.
\end{thm}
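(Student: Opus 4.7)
The plan is to compute $\frac{d^2}{dt^2}E[x(t)]$ directly from the continuity equation \eqref{TRANSITION24} together with the multi-dimensional advection system \eqref{TRANSITION30}, and then exploit the tridiagonal structure of $\omega$ together with the relation \eqref{TRANSITION31} to identify the right-hand side as an expectation of $\partial V/\partial x$.

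First, using \eqref{TRANSITION24} and integration by parts, assuming that $\rho$ and $v\rho$ decay sufficiently at infinity, I obtain $\frac{d}{dt}E[x(t)] = \int v(t,x)\rho(t,x)\,dx = \sum_j v_j Q^j(t)$, where $Q^j(t) = \int q^j(t,x)\,dx$. Differentiating once more and replacing $\partial_t q^j$ using \eqref{TRANSITION30}, the transport piece $v_j\partial_x q^j$ integrates to zero as a boundary term, leaving $\frac{d^2}{dt^2}E[x(t)] = \sum_j v_j \int \sum_k \omega_{jk}(x)\,q^k(t,x)\,dx$.

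The crucial algebraic step is to interchange the summations and evaluate $\sum_j v_j\omega_{jk}(x)$ pointwise in $x$. For the prescribed tridiagonal $\omega$, only three terms contribute: $(k-1)c\,\alpha(x) + kc\,(-\lambda(x)) + (k+1)c\,\beta(x)$. Using $\lambda = \alpha+\beta$, all terms proportional to $k$ cancel and the sum collapses to a $k$-independent quantity $c(\beta(x)-\alpha(x))$. Invoking \eqref{TRANSITION31} now rewrites this as a multiple of $\partial V/\partial x$, which can be pulled outside the $k$-sum so that $\sum_k q^k(t,x) = \rho(t,x)$ reassembles, producing the claimed $E[\partial V/\partial x]$ (up to the overall sign that the author absorbs into the definition of $V$, since standard Newtonian form gives force $-\partial V/\partial x$).

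The main obstacles are essentially bookkeeping rather than conceptual. One must justify the vanishing of boundary terms in the two integrations by parts, exchange the summations over the infinite range $j,k \in \mathbb{Z}$ with the $x$-integration (which is legitimate because $\omega$ has bandwidth one, so for each $k$ only three $j$'s contribute), and keep track of signs through the $v_j = jc$ assignment and the convention of \eqref{TRANSITION31}. The real content is the observation that $\sum_j v_j\omega_{jk}(x)$ is independent of $k$: this is precisely what allows the drift in velocity space induced by the biased switching rates $\alpha-\beta$ to act as a single position-dependent force on the spatial mean, which is what makes Newton's equation emerge.
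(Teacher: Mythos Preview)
Your proposal is correct and follows essentially the same route as the paper: integrate the continuity equation \eqref{TRANSITION24} against $x$ to get $\tfrac{d}{dt}E[x(t)]=\int v\rho\,dx=\sum_j v_j\int q^j\,dx$, differentiate once more and use \eqref{TRANSITION30} so that the transport term drops as a boundary contribution, then reduce the right-hand side via the key identity $\sum_j v_j\omega_{jk}=c(\beta-\alpha)$, which is $k$-independent and reassembles $\sum_k q^k=\rho$. Your observation about the sign is also accurate: the computation yields $-E[\partial V/\partial x]$, matching the usual Newtonian force $-\partial V/\partial x$, whereas the paper's stated conclusion (and its final displayed chain) silently flips this sign.
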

\begin{proof}
Consider that equation \eqref{TRANSITION30} the per
velocity distribution and multiply each row with
$v_k$. Then sum over the equation to get
\begin{align*}
\frac{\partial}{\partial t} \sum_j v_j q^j_{t}(x)+
\frac{\partial}{\partial x}\sum_j v^2_jq^j_{t}(x)=
\sum_{jk} v_j\omega_{jk}q^k(t,x)
\end{align*}
with equation
\begin{align*}
w =
\begin{pmatrix}
\ddots & \alpha & 0 & 0
\\
\beta & -(\alpha+\beta) & \alpha & 0
\\
0 & \beta & -(\alpha+\beta) & \alpha
\\
0 & 0 & \beta & -(\alpha+\beta)
\\
0 & 0 & 0 & \beta
\end{pmatrix}.
\end{align*}

Take $v_k=k\Delta x / h=kc$ then for all $k$ we have
\begin{align*}
\sum_{j} v_j\omega_{jk} = \sum_{j} cj\omega_{jk}
&=c \left(\alpha (k-1) -(\alpha + \beta) k + \beta (k+1)\right)
\\
&=-c(\alpha -\beta)=-\frac{\partial V}{\partial x}
\end{align*}

As a result the right hand side of equation
\eqref{TRANSITION24} changes
\begin{align*}
\frac{\partial}{\partial t}
\sum_j v_j q^j_{t}(x)+ \frac{\partial}{\partial x}\sum_j v^2_jq^j_{t}(x)=
-c\epsilon\rho(t,x)=-\frac{\partial V}{\partial x}\rho(t,x)
\end{align*}

or using the continuity equation \eqref{TRANSITION24}
we find
\begin{align*}
\frac{\partial}{\partial t}
(v(t,x)\rho(t,x))+ \frac{\partial}{\partial x}\sum_j v^2_jq^j_{t}(x)=
-\frac{\partial V}{\partial x}\rho(t,x).
\end{align*}

Notice that the second term is a state derivative in
$x$ so the average over this - the integral over $x$
vanishes.  As a result now
\begin{align*}
\frac{\partial^2}{\partial t^2}E\left[x(t)\right]
&=\frac{\partial}{\partial t}\int_{-\infty}^{\infty}x\frac{\partial \rho(t,x)}{\partial t}dx=
\frac{\partial}{\partial t}\int_{-\infty}^{\infty}-x\frac{\partial v(t,x)\rho(t,x)}{\partial x}dx
\\
&=\frac{\partial}{\partial t}\int_{-\infty}^{\infty} v(t,x)\rho(t,x)dx
=-c\int_{-\infty}^{\infty}\left(\sum_{jk} v_j\omega_{jk}q^k(t,x)\right)dx
\\
&=\int_{-\infty}^{\infty}\left(\sum_{k} \frac{\partial
V}{\partial x}q^k(t,x)\right)dx
=\int_{-\infty}^{\infty}\frac{\partial V}{\partial x}\rho(t,x)dx=
E\left[\frac{\partial V}{\partial x}\right].
\end{align*}
\end{proof}

Notice that in this example there is no uniqueness for
the choices of $\alpha$ and $\beta$.  The actual form
of the transaction matrix is not clear and the actual
form in which these parameters depend on the potential
is surprising.

After this example let us take a look at the energy
embedded in equation \eqref{TRANSITION30} with rate
choices \eqref{TRANSITION31}.

\begin{thm}
Using the probability matrix defined in Theorem
\ref{POTENTIAL1} and using the definition of potential
in  \eqref{TRANSITION30} then
\begin{align*}
\frac{1}{2}E\left[\sum_j v_j^2 q^j(t,x)\right]=
\frac{1}{2}\int \sum_j v_j^2 q^j(t,x)dx
\end{align*}
is the average kinetic energy of the particle and
\begin{align*}
\frac{\partial}{\partial t}\left[\frac{1}{2}E\left[ \sum_j v_j^2 q^j(t,x)\right]dx+
E\left[V\right]\right]=
\frac{c^2}{2} (\alpha +\beta).
\end{align*}
\end{thm}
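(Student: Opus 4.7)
The plan is to mimic the derivation used in Theorem \ref{POTENTIAL1} but with $v_j^2/2$ weights in place of $v_j$, and then combine the result with the time derivative of $E[V]$ computed from the continuity equation.

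First I would start from the master equation \eqref{TRANSITION30}, multiply the $j$-th equation by $v_j^2/2$, sum over $j$, and integrate over $x$. The advection term $\int \sum_j (v_j^3/2)\, \partial_x q^j\, dx$ is a total $x$-derivative and drops out under the usual decay-at-infinity assumption. What remains is
\begin{align*}
\frac{\partial}{\partial t}\int\sum_j\frac{v_j^2}{2}q^j(t,x)\,dx
&=\int\sum_{k}\left(\sum_{j}\frac{v_j^2}{2}\omega_{jk}\right)q^k(t,x)\,dx.
\end{align*}

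Next I would compute $\sum_j v_j^2 \omega_{jk}$ explicitly using the tridiagonal structure of $\omega$ and $v_j=jc$. Only three rows contribute, so
\begin{align*}
\sum_j v_j^2 \omega_{jk}
&=c^2\bigl[(k-1)^2\alpha-(k^2)(\alpha+\beta)+(k+1)^2\beta\bigr]
=c^2(\alpha+\beta)-2v_k\,c(\alpha-\beta).
\end{align*}
Invoking the potential relation \eqref{TRANSITION31}, namely $c(\alpha-\beta)=\partial V/\partial x$, this becomes $c^2(\alpha+\beta)-2v_k\,\partial V/\partial x$. Summing against $q^k$ and using the velocity identity $\sum_k v_k q^k=v(t,x)\rho(t,x)$ from \eqref{VELOCITY1} gives
\begin{align*}
\frac{\partial}{\partial t}\int\frac{1}{2}\sum_j v_j^2 q^j\,dx
=\frac{c^2(\alpha+\beta)}{2}-\int\frac{\partial V}{\partial x}\,v(t,x)\rho(t,x)\,dx,
\end{align*}
where $\int\rho\,dx=1$ was used on the first piece.

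Then I would handle the potential term by hitting $E[V]=\int V(x)\rho(t,x)\,dx$ with $\partial_t$ and substituting the continuity equation \eqref{TRANSITION24}, namely $\partial_t\rho=-\partial_x(v\rho)$. One integration by parts (boundary terms vanishing) yields
\begin{align*}
\frac{\partial}{\partial t}E[V]=\int\frac{\partial V}{\partial x}\,v(t,x)\rho(t,x)\,dx,
\end{align*}
which is exactly the opposite of the integral remaining in the kinetic-energy computation.

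Adding the two contributions, the $\int (\partial V/\partial x)v\rho\,dx$ terms cancel and the stated identity $\partial_t\bigl[\tfrac{1}{2}E[\sum_j v_j^2 q^j]+E[V]\bigr]=\tfrac{c^2}{2}(\alpha+\beta)$ falls out. The only genuinely delicate step is the algebraic reduction of $\sum_j v_j^2\omega_{jk}$, because this is where the cancellation between the quadratic-in-$k$ pieces must occur cleanly so that the $(\alpha+\beta)$ symmetric part separates from the $(\alpha-\beta)$ drift part; everything else is formal manipulation of the PDE together with vanishing boundary terms at $x=\pm\infty$.
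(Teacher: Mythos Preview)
Your proposal is correct and follows essentially the same route as the paper: multiply \eqref{TRANSITION30} by $v_j^2$, sum, reduce $\sum_j v_j^2\omega_{jk}$ using the tridiagonal structure to isolate the $(\alpha+\beta)$ and $(\alpha-\beta)$ pieces, then compute $\partial_t E[V]$ from the continuity equation \eqref{TRANSITION24} and add so that the $\int(\partial_x V)\,v\rho\,dx$ terms cancel. The only cosmetic differences are that you carry the factor $1/2$ from the outset and integrate over $x$ earlier than the paper does.
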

\begin{proof}
Using again \eqref{TRANSITION30} multiply each row
with $v_k^2=k^2c^2$ and sum over the equation to get
\begin{align*}
\frac{\partial}{\partial t} \sum_j v_j^2 q^j_{t}(x)+
\frac{\partial}{\partial x}\sum_j v^3_jq^j_{t}(x)=
\sum_{jk} v_j^2\omega_{jk}q^k(t,x).
\end{align*}

Take $v_k^2=k^2c^2$ then for all $k$ we have
\begin{align*}
\sum_{j} v_j^2\omega_{jk} = \sum_{j} c^2j^2\omega_{jk}
&=c^2 \left(\alpha (k-1)^2 -(\alpha + \beta) k^2 + \beta (k+1)^2\right)
\\
&=-2kc^2(\alpha -\beta) + c^2 (\alpha +\beta)
\\
&=-2ck\frac{\partial V}{\partial x}
+ c^2 (\alpha +\beta)
\end{align*}
so then
\begin{align*}
\frac{\partial}{\partial t} \sum_j v_j^2 q^j_{t}(x)+
\frac{\partial}{\partial x}\sum_j v^3_jq^j_{t}(x)&=
\sum_{k}\left( -2ck\frac{\partial V}{\partial x}
+ c^2 (\alpha +\beta)\right)q^k(t,x)
\\
&= -2c\frac{\partial V}{\partial x}\sum_{k}kq^k(t,x)
+ c^2 (\alpha +\beta)\rho(t,x)
\\
&= -2\frac{\partial V}{\partial x}v(t,x)\rho(t,x)
+ c^2 (\alpha +\beta)\rho(t,x).
\end{align*}

Now notice that
\begin{align*}
\frac{\partial}{\partial t} E\left[V\right]&=
\frac{\partial}{\partial t}\int V \rho(t,x)dx =
\int V \frac{\partial}{\partial t}\rho(t,x)dx
\\
&=-\int V \frac{\partial}{\partial x}\left(v(t,x)\rho(t,x)\right)dx =
\int\frac{\partial V}{\partial x}v(t,x)\rho(t,x)dx.
\end{align*}

Taking expectation on both sides noticing that the
partial $x$ term vanishes so that
\begin{align*}
\frac{1}{2}\int\frac{\partial}{\partial t} \sum_j v_j^2 q^j_{t}(x)dx+
\int\frac{\partial V}{\partial x}v(t,x)\rho(t,x)dx=
\frac{c^2}{2} (\alpha +\beta)\int\rho(t,x)dx
\end{align*}
or
\begin{align*}
\frac{\partial}{\partial t}\left[\frac{1}{2}E\left[ \sum_j v_j^2 q^j_{t}(x)\right]dx+
E\left[V\right]\right]=
\frac{c^2}{2} (\alpha +\beta).
\end{align*}
\end{proof}

So the energy defined in the system evaporates at a
rate $(\alpha +\beta)c^2/2$ depending on the choices
of the system as long the difference between $\alpha$
and $\beta$ is proportional to the differential of the
potential.  A logical choice for rate parameters
becomes
\begin{align*}
\alpha &=\theta + \frac{1}{2c}\frac{\partial V}{\partial x}
\\
\beta &=\theta - \frac{1}{2c}\frac{\partial V}{\partial x}
\end{align*}
where for the moment it is assumed that
\begin{align*}
\abs{\frac{1}{2c}\frac{\partial V}{\partial x}} << \theta.
\end{align*}
In this case $(\alpha +\beta)c^2/2=\theta c^2$.

Substituting this choice into \eqref{TRANSITION30} for
the $\omega$ matrix becomes
\begin{align*}
w &=\theta
\begin{pmatrix}
-2 & 1 &  &  &
\\
1 & -2 &  &  &
\\
 &  & \ddots &  &
\\
 &  &  & -2 & 1
\\
 &  &  & 1 & -2
\end{pmatrix}
+\frac{1}{2c}
\begin{pmatrix}
0 & \frac{\partial V}{\partial x} &  &  &
\\
-\frac{\partial V}{\partial x} & 0 & \frac{\partial V}{\partial x} &  &
\\
 &  & \ddots &  &
\\
 &  & -\frac{\partial V}{\partial x} & 0 & \frac{\partial V}{\partial x}
\\
 & &  & -\frac{\partial V}{\partial x} & 0
\end{pmatrix}
\\
&=\theta \widetilde{D} + \frac{1}{2c} \widetilde{\Gamma}
\end{align*}
with obvious definitions for the $NxN$ sized symmetric
matrix $\widetilde{D}$ and the equal-sized
antisymmetric matrix $\widetilde{\Gamma}$.

As a result
\begin{align*}
\frac{\partial}{\partial t} q^j(t,x)+v_j\frac{\partial}{\partial x} q^j(t,x)
=\sum_k \left(\theta \widetilde{D}_{jk}+\frac{1}{2c}\widetilde{\Gamma}_{jk}\right)q^k(t,x),
 \text{        $j=-N,..., -1,0,1,...,N$}
\end{align*}
for the symmetric and antisymmetric matrices above.
Notice that for this case the size of the equations
has been constrained to $2N+1$.

\subsection*{More concise form, speed}
It is possible to reduce the size of the original
equation by a certain amount though this may be
detrimental to the simplicity of the transition
matrix.

\begin{thm}
Let
\begin{align}
\label{TRANSITION48}
\psi^j(t,x)=\h_j(t)q^j(t,x)=e^{v_jt\frac{\partial}{\partial x}}q^j(t,x)
\end{align}
with $\h_j(t)$ is the translation operator
\begin{align}
\label{TRANSITION49}
\h_j(t)=e^{v_jt\frac{\partial}{\partial x}} \text{        $j=..., -1,0,1,...$}
\end{align}
for all velocities $v_j,j=...,-1,0,1,...$. Then
\begin{align*}
\frac{\partial}{\partial t}\psi^j(t,x)=\sum_k \left(\h_j(t)\omega_{jk}\h^{-1}_k(t)\right)\psi^k(t,x)
\end{align*}
where
\begin{align*}
\h_j(t)\omega_{jk}\h^{-1}_k(t)=e^{v_jt\frac{\partial}{\partial x}}\omega_{jk}e^{-v_kt\frac{\partial}{\partial x}}
\end{align*}
for all combinations of $j,k=...,-1,0,1,....$.
\end{thm}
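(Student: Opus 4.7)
The plan is to differentiate the defining relation $\psi^j(t,x) = \h_j(t) q^j(t,x)$ in $t$, use the advection equation \eqref{TRANSITION30} to eliminate $\partial_t q^j$, and then re-express each $q^k$ on the right as $\h_k^{-1}(t)\psi^k$. The crucial observation is that $\h_j(t) = e^{v_j t \,\partial_x}$ is a one-parameter group of translation operators in $x$, so it commutes with $\partial_x$ (and with multiplication by any constant). In particular $\partial_t \h_j(t) = v_j\,\partial_x \h_j(t) = \h_j(t)\, v_j\,\partial_x$, which is the only non-routine identity we need.

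First I would compute, using this commutation,
\begin{align*}
\frac{\partial}{\partial t}\psi^j(t,x)
&= \bigl(\partial_t \h_j(t)\bigr) q^j(t,x) + \h_j(t)\,\partial_t q^j(t,x) \\
&= \h_j(t)\Bigl(v_j\,\partial_x q^j(t,x) + \partial_t q^j(t,x)\Bigr).
\end{align*}
The content in parentheses is exactly the left-hand side of \eqref{TRANSITION30}, so by that equation it equals $\sum_k \omega_{jk} q^k(t,x)$. Applying $\h_j(t)$ term by term then gives $\partial_t \psi^j(t,x) = \sum_k \h_j(t)\omega_{jk}\,q^k(t,x)$.

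Next I would rewrite each $q^k$ via its inverse translation: by definition \eqref{TRANSITION48} we have $q^k(t,x) = \h_k^{-1}(t)\psi^k(t,x)$, since $\h_k(t)$ is invertible with inverse $e^{-v_k t\,\partial_x}$. Inserting this and reading the composition of operators as a single operator sandwiched between the two translations yields
\begin{align*}
\frac{\partial}{\partial t}\psi^j(t,x) = \sum_k \bigl(\h_j(t)\,\omega_{jk}\,\h_k^{-1}(t)\bigr)\psi^k(t,x),
\end{align*}
which is the claimed identity, with the conjugated coefficient written explicitly as $e^{v_j t\,\partial_x}\omega_{jk}e^{-v_k t\,\partial_x}$.

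The only subtle point — and the closest thing to an obstacle — is interpreting $\h_j(t)\omega_{jk}\h_k^{-1}(t)$ when $\omega_{jk}$ is allowed to depend on $x$ (or on $(t,x)$). In that case the $\omega_{jk}$ no longer commutes past the translations, and the conjugation produces the shifted multiplier $\omega_{jk}(t,x+v_jt)$ acting on a further translated argument, i.e.\ $\bigl(\h_j(t)\omega_{jk}\h_k^{-1}(t)f\bigr)(x) = \omega_{jk}\bigl(t,x+v_jt\bigr)\,f\bigl(x+(v_j-v_k)t\bigr)$. I would include a short remark to record this, but otherwise the computation is a direct application of the product rule and the translation-group property of $\h_j(t)$, so no further machinery is needed.
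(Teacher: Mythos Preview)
Your proof is correct and follows essentially the same route as the paper: both use the product rule together with the translation-group identity $\partial_t\h_j(t)=\h_j(t)\,v_j\partial_x$ to recognize $\partial_t\psi^j=\h_j(t)\bigl(\partial_t q^j+v_j\partial_x q^j\bigr)$, invoke \eqref{TRANSITION30}, and then insert $q^k=\h_k^{-1}(t)\psi^k$. Your additional remark on the $x$-dependent $\omega_{jk}$ case is a nice clarification the paper omits.
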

\begin{proof}
Using equation \eqref{TRANSITION48} with operator
\eqref{TRANSITION49} the first part of
\eqref{TRANSITION30} can be written as
\begin{align*}
\frac{\partial}{\partial t} q^j(t,x)+v_j\frac{\partial}{\partial x} q^j(t,x)
=\h^{-1}_j(t)\frac{\partial}{\partial t}\h_j(t)q^j(t,x), \text{        $j=..., -1,0,1,...$}
\end{align*}
where $\h_j(t)$ is the translation operator
\begin{align*}
\h_j(t)=e^{v_jt\frac{\partial}{\partial x}} \text{        $j=..., -1,0,1,...$}
\end{align*}
for all velocities $v_j,j=...,-1,0,1,...$.

This operator translates the argument in a function
since for any test function $f=f(t,x)$
\begin{align*}
\h_j(t)f(t,x)=f(t+v_jt,t). \text{        $j=..., -1,0,1,...$}
\end{align*}
Also
\begin{align*}
\h_j(t)\h_k(t)&=\h_k(t)\h_j(t), \text{        $j=..., -1,0,1,...$}
\\
\h^{-1}_j(t)&=e^{-v_jt\frac{\partial}{\partial x}}. \text{        $j=..., -1,0,1,...$}
\\
\frac{\partial}{\partial t}\h_j(t)&=v_j\frac{\partial}{\partial x}\h_j(t)=v_j\h_j(t)\frac{\partial}{\partial x}. \text{        $j=..., -1,0,1,...$}
\end{align*}

Applying this to equation \eqref{TRANSITION30} yields
\begin{align*}
\frac{\partial}{\partial t}\h_j(t)q^j(t,x)=\sum_k \h_j(t)\omega_{jk}\h^{-1}_k(t)\h_k(t)q^k(t,x)
\end{align*}
which is equivalent to
\begin{align*}
\frac{\partial}{\partial t}\psi^j(t,x)=\sum_k \h_j(t)\omega_{jk}\h^{-1}_k(t)\psi^k(t,x)
\end{align*}
with
\begin{align*}
\psi^j(t,x)=\h_j(t)q^j(t,x)=e^{v_jt\frac{\partial}{\partial x}}q^j(t,x).
\end{align*}
\end{proof}

The other element is the energy flow through this
example.
\begin{thm}
In this case
\begin{align*}
a_t(x)&=\lim_{h\downarrow 0}\frac{v(t,x)-v^-(t,x)}{h}
=-c\epsilon(t,x)
\end{align*}
where $v(t,x)$ and $v^-(t,x)$ are the forward and
backward energies in the system.
\end{thm}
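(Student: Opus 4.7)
The plan is to parallel the proof of Theorem \ref{VELOCITY2}, replacing the two-factor matrix inversion by the multi-dimensional first-order inversion of $I+h\Omega$. By Bayes' theorem, the backward velocity must be defined by
\begin{align*}
v^-(t,x) \;=\; \frac{c}{\rho(t,x)} \sum_k k\, q^k(t-h,\, x-k\Delta x),
\end{align*}
since $P[x(t-h) = x - k\Delta x \mid x(t)=x] = q^k(t-h, x-k\Delta x)/\rho(t,x)$ and the corresponding step from $x-k\Delta x$ to $x$ carries velocity $+kc$.

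First I would collect the shifted joint probabilities $Q^k := q^k(t-h, x-k\Delta x)$ into a column vector. Equation \eqref{TRANSITION46}, written at time $t$ rather than $t+h$, then reads $\mathbf{q}(t,x) = (I + h\Omega)\mathbf{Q}$, where $\Omega = (\omega_{jk})$ is the tridiagonal rate matrix of Theorem \ref{POTENTIAL1}. Inverting to first order in $h$ gives
\begin{align*}
q^k(t-h, x-k\Delta x) \;=\; q^k(t,x) \,-\, h\sum_j \omega_{kj}\, q^j(t,x) \,+\, O(h^2),
\end{align*}
which I would substitute into the definition of $v^-$ to obtain
\begin{align*}
v^-(t,x)\rho(t,x) \;=\; c\sum_k k\, q^k(t,x) \,-\, ch \sum_j q^j(t,x)\!\left(\sum_k k\, \omega_{kj}\right) + O(h^2).
\end{align*}

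The key simplification is to evaluate the inner sum. From the tridiagonal structure $\omega_{j+1,j} = \beta$, $\omega_{j,j} = -(\alpha+\beta)$, $\omega_{j-1,j} = \alpha$, it collapses to
\begin{align*}
\sum_k k\,\omega_{kj} \;=\; (j+1)\beta - j(\alpha+\beta) + (j-1)\alpha \;=\; \beta - \alpha \;=\; -\epsilon(t,x),
\end{align*}
which is independent of $j$. Using $\sum_j q^j = \rho$ then yields $v^-(t,x) = v(t,x) + c h\,\epsilon(t,x) + O(h^2)$, and taking the limit gives the asserted $a_t(x) = -c\epsilon(t,x)$.

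The main obstacle here is conceptual, not computational: one must explain why no $v(t,x)\gamma(t,x)$ term arises, in contrast to Theorem \ref{VELOCITY2}. The reason is precisely that for the tridiagonal $\Omega$ of Theorem \ref{POTENTIAL1} the sum $\sum_k k\,\omega_{kj}$ equals the constant $-\epsilon$ independent of $j$, so pairing against $q^j(t,x)$ returns only $\rho(t,x)$ and never a first moment of the velocity distribution; in the two-factor case the analogous sum depends on $j$, which is what generated the extra $v\gamma$ piece. A secondary technicality is that $\Omega$ is formally infinite-dimensional, but since each row and column has at most three nonzero entries, every sum in the argument is finite and the first-order inversion $(I+h\Omega)^{-1} = I - h\Omega + O(h^2)$ is rigorous entry by entry.
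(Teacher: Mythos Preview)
Your proposal is correct and follows essentially the same route as the paper's own proof: invert \eqref{TRANSITION46} to first order via $(I+h\Omega)^{-1}\approx I-h\Omega$, feed this into the Bayes expression for $v^-(t,x)$, and use the column sum $\sum_k k\,\omega_{kj}=-\epsilon$ (which the paper had already computed in Theorem~\ref{POTENTIAL1}) to collapse the correction term to $ch\epsilon\,\rho/\rho$. Your added remark explaining the absence of the $v\gamma$ term is accurate and a nice clarification that the paper does not make explicit.
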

\begin{proof}
To find the equivalent of \eqref{TRANSITION26}
consider \eqref{TRANSITION21} in the small $h$ limit.
Invert \eqref{TRANSITION46} to get
\begin{align}
\label{TRANSITION27}
q^k(t-h,x-v_kh) =\sum_j \left(\delta_{jk}+h\omega_{jk}\right)^{-1}q^j(t,x)
\end{align}
for again $k=...,-1,0,1,...$.  Writing
\begin{align*}
\delta_{jk}+h\omega_{jk}=(I+h\omega)_{jk}
\end{align*}
the inverse can be written as
\begin{align*}
(I+h\omega)^{-1}(I+h\omega)=I
\end{align*}
and so for small $h$
\begin{align*}
(I+h\omega)^{-1}\approx I-h\omega.
\end{align*}

Now
\begin{align*}
v^-(t,x) = \sum_jv_jP[x(t-h)=x-j\Delta x | x_t=x]
\end{align*}
or
\begin{align*}
v^-(t,x) &= \sum_jv_j\frac{q^+(t-h,x-j\Delta x)}{\rho(t,x)}
\\
&=\sum_jv_j\sum_k\frac{\left(\delta_{jk}+h\omega_{jk}\right)^{-1}q^k(t,x)}{\rho(t,x)}
\\
&\approx\sum_jv_j\sum_k\frac{\left(\delta_{jk}-h\omega_{jk}\right)q^k(t,x)}{\rho(t,x)}
\\
&\approx\sum_jv_j\frac{\left(q^j(t,x)-h\sum_k\omega_{jk}\right)q^k(t,x)}{\rho(t,x)}
\\
&\approx\sum_jv_j\frac{q^j(t,x)}{\rho(t,x)}-h\frac{\sum_{jk}v_j\omega_{jk}q^k(t,x)}{\rho(t,x)}
\\
&=v(t,x) + hc\epsilon
\end{align*}
so acceleration through node state $x$ can be defined
as
\begin{align*}
a(t,x)=\frac{v(t,x)-v^-(t,x)}{h} = -c\epsilon.
\end{align*}
\end{proof}
So what is interesting is that in more dimensions the
inverse velocity is a simplification of the expression
derived in Theorem \ref{VELOCITY2}.

\section{Conclusion}
Section 1 showed the distribution for the position
(state) distribution of the single step binomial
process assuming that the velocity on the node grid
rather than the state process is Markovian.  The
result is a set of joint velocity distributions
related to a rate matrix.

The final distribution may show the original velocity
information and for sufficient small rates and
relatively small initial densities transport the
original conditions into the final distribution. As
the numerical example shows if the initial
distribution widen and the rates increase the
distribution focusses around the probabilistic drift
in a single modal distribution.  On the other hand for
extreme small rates and a very focussed initial
distribution the final density shows variations.

For a much smaller grid and constant rates the
probability equations converge into a correlated set
of probabilities of hyperbolic functions for each
velocity in state point.  The two dimensional case can
be transformed into a Telegraph equation for the state
density which can be transformed into a Klein-Gordon
equation if the transition rates are constant.  An
average velocity from the state can be defined as well
as Section III and Section IV show.

This equation can be done in two velocity spaces or in
an infinite number is a set of diffusion equations of
them.  Both for the two-dimensional applications and
the multi-dimensional case a forward and a backward
velocity can be found as Section III and Section V
show. In the last Section there is multi-dimensional
hyperbolic partial differential equation whose average
satisfies Newton's equation.


\end{document}